\DeclareMathOperator{\Tr}{\mathrm{Tr}}
\newcommand{\set}[1]{\mathcal{#1}}
\newtheorem{theo}{Theorem}
\newtheorem{lemm}[theo]{Lemma}%
\newtheorem{rema}{Remark}%
\newtheorem{cor}[theo]{Corollary}%
\newcommand{\argmin}{\mathop{\rm argmin}\limits}
\def\R{{\mathbb R}}
\def\<{\langle}
\def\>{\rangle}
\def\QED{\mbox{\rule[0pt]{1.5ex}{1.5ex}}}
\newcommand{\qed}{\hfill \QED}
 \newenvironment{proofof}[1]{\vspace*{5mm} \par \noindent
{\it Proof of #1:\hspace{2mm}}}{\qed
}
\def\Label#1{\label{#1}\ [\ \text{#1}\ ]\ }
\def\Label{\label}
\begin{document}
\title{Efficient algorithms for quantum information bottleneck}

\author{Masahito Hayashi}
\email{hayashi@sustech.edu.cn}
\affiliation{Shenzhen Institute for Quantum Science and Engineering, Southern University of Science and Technology, Shenzhen,518055, China}
\affiliation{International Quantum Academy (SIQA), Futian District, Shenzhen 518048, China}
\affiliation{Guangdong Provincial Key Laboratory of Quantum Science and Engineering, Southern University of Science and Technology, Shenzhen, 518055, China}
\affiliation{Graduate School of Mathematics, Nagoya University, Nagoya, 464-8602, Japan}
\author{Yuxiang Yang}
\email{yuxiang@cs.hku.hk}
\affiliation{QICI Quantum Information and Computation Initiative, Department of Computer Science, The University of Hong Kong, Pokfulam Road, Hong Kong}

\begin{abstract}
The ability to extract relevant information is critical to learning. An ingenious approach as such is the information bottleneck, an optimisation problem whose solution corresponds to a faithful and memory-efficient representation of relevant information from a large system.
The advent of the age of quantum computing calls for efficient methods that work on information regarding quantum systems. Here we address this by proposing a new and general algorithm for the quantum generalisation of information bottleneck. Our algorithm excels in the speed and the definiteness of convergence compared with prior results. It also works for a much broader range of problems, including the quantum extension of deterministic information bottleneck, an important variant of the original information bottleneck problem.
Notably, we discover that a quantum system can achieve strictly better performance than a classical system of the same size regarding quantum information bottleneck, providing new vision on justifying the advantage of quantum machine learning.
\end{abstract}

\maketitle
\section{Introduction}
Learning is a task of eminent importance to the contemporary world. As such, it has always been of top priority to quest powerful tools for learning information.
Information bottleneck \cite{tishby1999} stands as an excellent example, with many useful applications including 
deep learning  \cite{tishby2015deep,shwartz2017opening,goldfeld2020information}, 
video processing \cite{10.1145/1180639.1180654},
clustering \cite{10.1145/345508.345578} and polar coding \cite{8368978}.
Concretely, information bottleneck is a method to 
extract a piece of information $T$
with respect to the system $Y$ from the system $X$, and
is formulated as the minimization problem of the difference $I(T:X)-\beta I(T:Y)$ with a positive parameter $\beta$, where $I(T:X)$ is the mutual information between $T$ and $X$.
In particular, we are interested in the case when $X$ is classical.
By design, information bottleneck achieves an irreversible compression, by extracting essential information about $Y$ and simultaneously removing unessential information contained in $X$.

\begin{figure}[tb]{}
\begin{center} 
\includegraphics[width=\linewidth]{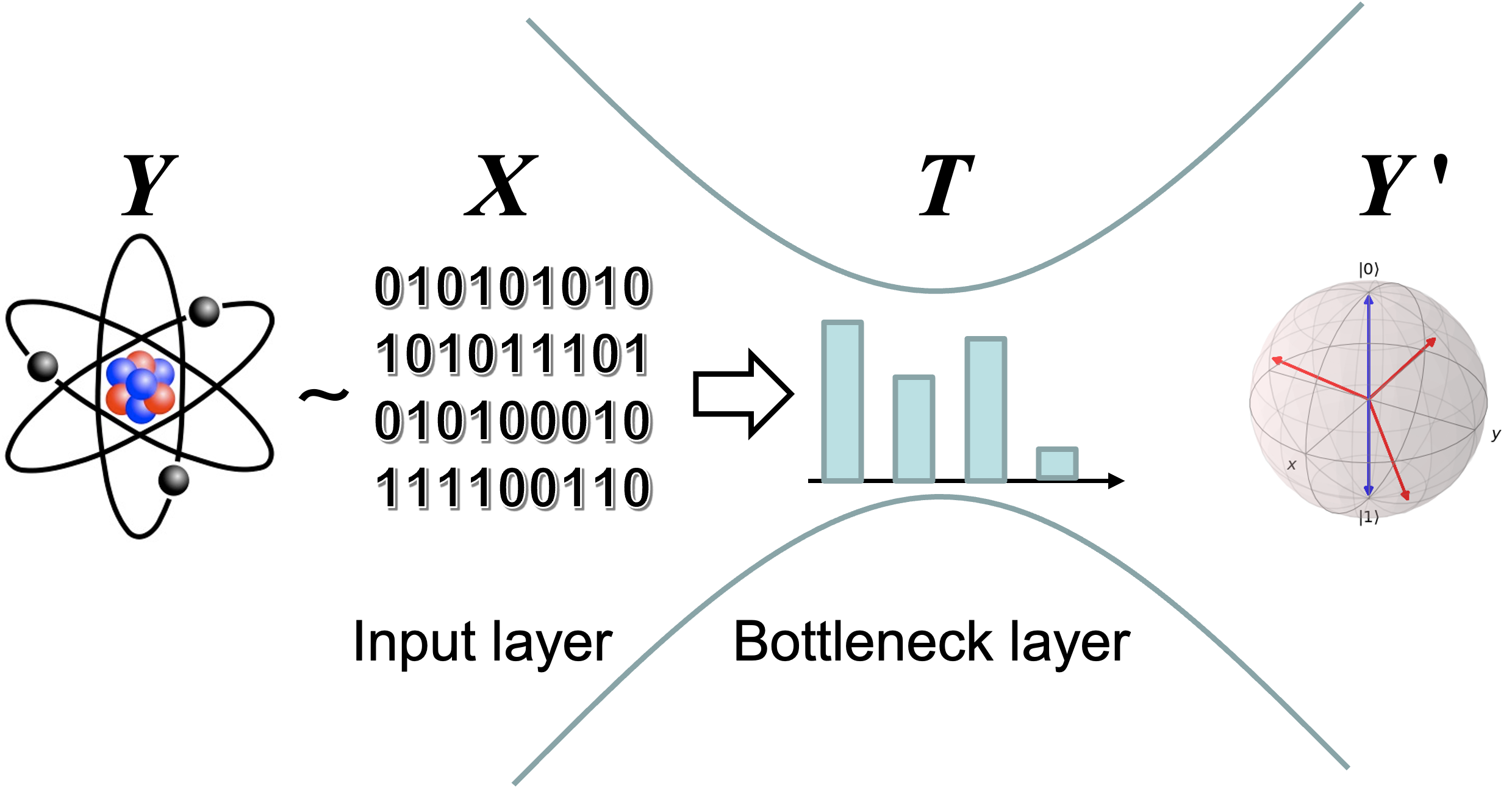}
\end{center}
\caption{{\bf Visualization of quantum information bottleneck.}
In a prototypical setting of quantum information bottleneck, the task is to compress a classical system into a smaller system $T$, which can be either classical or quantum, by extracting its useful information about a quantum system $Y$ and removing the useless information. It is expected that more relevant information $Y'$ about $Y$, instead of the entire $X$, can be recovered from $T$.
}\Label{F1} 
\end{figure}

As we are stepping into the age of quantum information, 
the demand is growing for a method that efficiently learns information on a quantum system. For this purpose, let us consider the setup of quantum information bottleneck (QIB), demonstrated in Fig.~\ref{F1}.
Similar as its classical counterpart, the aim of QIB is to compress $X$ into a smaller system $T$ while preserving the correlation with $Y$ when some of these systems are quantum systems. 
Prior to this work, QIB has been discussed in several recent works 
\cite{PhysRevA.94.012338,8513885,8849518,9174416,PRXQuantum.2.040321} and has been applied to quantum information theory \cite{8849518,9174416}  and quantum machine learning \cite{PRXQuantum.2.040321}.
On the other hand, the fundamental properties of QIB such as convergence have not been analysed, which hinders its application in more practical tasks. 
Quantum information bottleneck is first proposed as a quantum extension of information bottleneck method in \cite{PhysRevA.94.012338}.
It also derived a necessary condition
for the solution of the minimization problem (see \cite[Appendix A]{PhysRevA.94.012338})
by using Lagrange multiplier method in the same way as \cite{1054753,1054855}.
Using the obtained condition, it also proposed an iterative algorithm to find a solution to 
satisfy the necessary condition
\cite[Appendix C]{PhysRevA.94.012338}. 
Then, the reference \cite{8513885} considered QIB in the quantum communication scenario. 
\footnote{The reference \cite[Appendix A]{8513885} 
derived a necessary condition
for the solution of the minimization problem 
by using Lagrange multiplier method in the same way as \cite{1054753,1054855}.
Using the obtained condition, it also proposed an iterative algorithm to find a solution to 
satisfy the necessary condition
\cite[The end of Appendix C]{8513885}.}
However, no study discusses the behaviour of the iterative algorithm, i.e., 
it is not known whether the algorithm monotonically reduces the objective function
\cite{tishby1999,10.1162/NECO_a_00961,PhysRevA.94.012338,8513885}.
It was also claimed in \cite[Appendix B]{PhysRevA.94.012338} that there is no advantage of using a quantum $T$ if $X,Y$ are both classical.


In this work, we conduct a systematic study on quantum information bottleneck, focusing on the case when the system $X$ is classical. 
Compared to existing works \cite{PhysRevA.94.012338,8513885,8849518,9174416,PRXQuantum.2.040321}, our work makes significant contributions in several directions:

First, we provide throughout analyses on two critical properties -- efficiency and convergence -- of QIB. 
Motivated by a recent generalization \cite{Ramakrishnan} of the Arimoto-Blahut algorithm \cite{1054753,1054855}, we introduce a new quantum information bottleneck algorithm with an acceleration parameter $\gamma$ that can make the value of QIB converges much faster than before when chosen properly. 
We prove rigorous criteria for our algorithm to converge and to achieve a minimum. In particular, we prove that the choice of $\beta$ plays an important role in convergence.

Second, in contrast to the claim in Refs.~\cite{PhysRevA.94.012338,8513885}, we provide concrete examples where using a quantum instead of classical $T$ could reduce the minimal value of QIB.
Notably, our result justifies a genuine quantum advantage in quantum machine learning \cite{wittek2014quantum,schuld2015introduction,biamonte2017quantum}, where the employment of quantum circuits has been prevalent \cite{schuld2019quantum,havlivcek2019supervised,blank2020quantum,lloyd2020quantum,perez2020data,schuld2021supervised} but the quantum advantage was rarely justified.

Last but not least, we generalise QIB by considering a general
target function $(1-\alpha)H(T)+\alpha I(T:X)-\beta I(T:Y)$ with parameters $\alpha,\beta\ge0$, which reduces to the standard QIB when $\alpha=1$. By doing so, the generalised QIB contains QDIB, i.e., the quantum version of deterministic information bottleneck \cite{10.1162/NECO_a_00961}, by setting $\alpha=0$.
We show that our analyses and our algorithm hold for this generalised setting and, in particular, to QDIB.
Then, we clarify that QDIB can be used to find a good approximate sufficient statistics $T$ for 
$X$ for $Y$, which requires a smaller entropy $H(T)$ and larger mutual information
$I(T:Y)$.
We justify our finding via a numerical example, where QDIB extracts a good approximate sufficient statistics over information about a quantum ensemble.

In summary, our work addresses several critical issues of QIB, including convergence, efficiency, choice of parameters, and the quantum advantage. We also extend QIB to a generalised setting and introduce the notion of QDIB. Our results consist of both rigorous analytical analyses and numerical experiments that justifies the importance of QIB and QDIB in fundamental tasks of learning.


The remaining part of this paper is organized as follows.
Section \ref{sec-QIB} introduces our algorithm for  
quantum information bottleneck, and discusses its convergence and dependence of the 
parameter $\beta$.
Section \ref{SIV} discusses our algorithm when our memory system $T$ is classical.
Section \ref{S4} presents examples that  realizes
a smaller value of the target function by quantum memory $T$  
than by classical memory $T$.
Section \ref{S8} discusses an application of our QIB algorithm in data classification.
Section \ref{S6} proposes our algorithm for 
quantum deterministic information bottleneck, and studies its properties.
Section \ref{SQ} applies it to the extraction of approximate sufficient statistics,
and numerically verifies its efficiency in an example.
Section \ref{S9} makes discussion and conclusion.

\section{The quantum information bottleneck (QIB) problem}\Label{sec-QIB}

\subsection{Problem definition}
Consider a classical-quantum joint system composed of $X$ and $Y$ with the joint state 
\begin{align}\Label{joint-state}
\rho_{XY}:= \sum_{x}P_X(x)|x\rangle \langle x|\otimes \rho_{Y|x},
\end{align}
where $X$ is a classical system and $Y $ is a quantum system. 
Our quantum information bottleneck (QIB) problem aims at constructing an information processor, modelled by a c-q channel $\sigma_{T|X}$ from $X$ to $T$ (which prepares a quantum state $\sigma_{T|x}$ when the classical register is $x$), that extracts efficient information from $X$ with respect to the quantum system $Y$. After the action of the information processor, the joint state becomes:
\begin{align}
\rho_{XYT}:= \sum_{x}P_X(x)|x\rangle \langle x|\otimes \rho_{Y|x}\otimes \sigma_{T|x}.
\end{align}
To this aim, the QIB problem concerns constructing a classical-quantum channel $\sigma_{T|X}:X\to T$ that minimizes the information bottleneck function, consisting of entropic quantities defined with respect to the joint state $\rho_{XYT}$:
\begin{align}\Label{IB-function}
f_{\alpha}(\sigma_{T|X})&:=H(T)-\alpha H(T|X)-\beta I(T:Y) \nonumber \\
&= (1-\alpha)H(T)+\alpha I(T:X)-\beta I(T:Y),
\end{align}
where $H(T)$ denotes the entropy of $T$ \footnote{For convenience, the notation $H(A)$ stands for the Shannon entropy when the system $A$ is classical and for the von Neumann entropy when $A$ is quantum.}, $H(T|X)$ denotes the conditional entropy of $T$ on $X$, and $I(T:Y)$ stands for the mutual information between $T$ and $Y$. 

That is, our aim is the calculation of the following value:
\begin{align}\label{QIB-problem}
{\cal I}_{\alpha,\beta}:= \min_{\sigma_{T|X}}f_{\alpha}(\sigma_{T|X}).
\end{align}

In the information bottleneck (\ref{IB-function}), $\alpha$ and $\beta$ are positive real variables modelling the objective of the task. In the original proposal of information bottleneck \cite{tishby1999} $\alpha=1$. Another common choice of $\alpha$ is $\alpha=0$, and the task is called a deterministic QIB (whose classical counterpart was discussed in Ref.~\cite{10.1162/NECO_a_00961}). The parameter $\beta$ controls the tradeoff between faithfulness and compression. For instance, in a deterministic information bottleneck, a larger $\beta$ would make $I(T:Y)$ more prominent in the objective function, forcing the information processor to preserve more information about $Y$, whereas a smaller $\beta$ would signify the role of $I(T:X)$, prompting the information processor to do more compression in $X$.

Although 
this section addresses the case with 
quantum systems $Y$ and $T$,
the case with a classical system $Y$ and a quantum $T$ 
can be contained as a special case by considering the diagonal densities $\rho_{Y|x}$.
On the other hand, the case with a classical system $T$ 
is a different problem from 
the case with a quantum system $T$
because we need to discuss a different minimization problem, which has a different range for the minimizing variable.
Fortunately, our algorithm for a quantum system $T$, presented in the next subsection, can be applied to the case with a classical system $T$.
Section \ref{SIV} discusses the case of $T$ being classical.
We remark that the case where both $T$ and $Y$ are classical has been widely studied in classical information theory and machine learning; see, e.g., Refs.~\cite{tishby1999,tishby2015deep,10.1162/NECO_a_00961,shwartz2017opening}.

\subsection{QIB algorithm for $\alpha=1$}
The paper \cite{PhysRevA.94.012338} discussed this problem when 
$X,Y,T$ are quantum systems and $\alpha=1$,
extending the classical information bottleneck \cite{tishby1999} to the quantum regime.
It derived a necessary condition for $\sigma_{X|T}$ to achieve the minimum (\ref{QIB-problem}). 
The necessary condition 
with quantum systems $T,Y$ and a classical system $X$ 
is written as 
\begin{align}
\log \sigma_{T|x}=
& (1-\beta) \log \sigma_T[\sigma_{T|X}]  \nonumber \\
&-\beta \Tr_Y \rho_{Y|x} \Big( 
\log  
\rho_Y 
-
\log \sigma_{YT} [\sigma_{T|X}]
\Big)-C_x,\Label{MM6}
\end{align} 
 where $C_x$ is a normalizing constant and
\begin{align}
\rho_Y:=&\sum_{x} P_X(x)\rho_{Y|x}\\
\sigma_T[\sigma_{T|X}]:=&\sum_x P_X(x) \sigma_{T|x}\\
\sigma_{YT}[\sigma_{T|X}]:=&\sum_x P_X(x) \sigma_{T|x}\otimes \rho_{Y|x}.
\end{align}

Since this condition is self-consistent, 
using this condition, the paper \cite{PhysRevA.94.012338} proposed the following 
iterative algorithm with the following update rule:
\begin{align}
\sigma_{T|x}^{(n+1)}:=&
\frac{1}{e^{C_x}}\exp \Big( (1-\beta) \log  \sigma_T[\sigma_{T|X}^{(n)}]  \nonumber \\
&-\beta \Tr_Y \rho_{Y|x} \Big( 
\log  
\rho_Y 
-
\log \sigma_{YT} [\sigma_{T|X}^{(n)}]
 \Big) 
\Big).\Label{ACY}
\end{align} 

\subsection{The acceleration parameter $\gamma$}
Next, we propose an extension of the iterative algorithm in \cite{PhysRevA.94.012338}. 
First, we introduce a new parameter $\gamma>0$ and rewrite the condition \eqref{MM6} as:
\begin{align}
&\log \sigma_{T|x}=
(1-\frac{1}{\gamma}) \log \sigma_{T|x}
+\frac{1}{\gamma} \log \sigma_{T|x} \nonumber \\
=&
(1-\frac{1}{\gamma}) \log \sigma_{T|x}
+\frac{1}{\gamma} (1-\beta) \log \sigma_T[\sigma_{T|X}]  \nonumber \\
&-\frac{1}{\gamma} \beta \Tr_Y \rho_{Y|x} \Big( 
\log  
\rho_Y 
-
\log \sigma_{YT} [\sigma_{T|X}]
\Big)-\frac{1}{\gamma} C_x  \nonumber \\
=&
 \log \sigma_{T|x}
- \frac{1}{\gamma} {\cal F}_1[\sigma_{T|X}](x)
-\frac{1}{\gamma} C_x ,\Label{ACAC}
\end{align} 
where
\begin{align}
& {\cal F}_1[\sigma_{T|X}](x)  \nonumber \\
:=&  
-\log \sigma_{T}[\sigma_{T|X}]+ \log \sigma_{T|x}  \nonumber \\
&+\beta \Tr_Y \Big( \rho_{Y|x}\Big(
\log  (\sigma_T[\sigma_{T|X}] \otimes \rho_Y )-
\log \sigma_{YT} [\sigma_{T|X}]
\Big) \Big) .
\end{align} 
Using \eqref{ACAC}, we can derive another iterative algorithm as
\begin{align}
\sigma_{T|x}^{(n+1)}:=
\frac{1}{e^{\frac{1}{\gamma} C_x}}\exp \Big( 
\log \sigma_{T|x}^{(n)}-\frac{1}{\gamma} {\cal F}_1[\sigma_{T|X}^{(n)}](x)\Big).\Label{ACM}
\end{align} 

In this way, we can easily generalize the iterative algorithm \eqref{ACY} by \cite{PhysRevA.94.012338}.
However, it is not {trivial} to find the suitable value for $\frac{1}{\gamma}$, which, as we show later, is critical to the efficiency of our iterative algorithm.
Although many papers \cite{tishby1999,10.1162/NECO_a_00961,PhysRevA.94.012338,8513885} discussed 
the iterative algorithm given by \eqref{ACY} including the classical case,
no preceding study showed the convergence of the iterative algorithm by \eqref{ACY}.
In addition, the discussion above {focuses} on the case of $\alpha=1$ and does not include the case of deterministic information bottleneck ($\alpha=0$).
Therefore, to make an efficient algorithm, we need to discuss the choice of the parameter $\gamma$ for generic $\alpha$.

\subsection{QIB algorithm with general $\alpha$ and convergence}\Label{S2-C}
To analyze the convergence of the algorithm \eqref{ACM}, 
we introduce a two-input variable function
based on the idea in Ref.~\cite[Section III-B]{Ramakrishnan},
{whereas the method in Ref.~\cite[Section III-B]{Ramakrishnan}
was obtained as a generalization of the Arimoto-Blahut algorithm \cite{1054753,1054855}.}
The idea is that, instead of directly solving the minimization of $f_{\alpha}(\sigma_{T|X})$, which is often too difficult, we find a continuous function
$J(\sigma_{T|X},\sigma_{T|X}')$ with two variables $\sigma_{T|X},\sigma_{T|X}'$. Then we can update these two input variables $\sigma_{T|X},\sigma_{T|X}'$ alternately to decrease $J(\sigma_{T|X},\sigma'_{T|X})$. Finally, if the function satisfies
\begin{align}
f_{\alpha}(\sigma_{T|X})  
=
J(\sigma_{T|X},\sigma_{T|X})
\Label{BCRR},
\end{align}
the minimum of $J(\sigma_{T|X},\sigma'_{T|X})$ will be close to the minimum of the IB function.

The above type of functions can be constructed 
if we find {an} operator 
${\cal F}_\alpha[\sigma_{T|X}](x)$ to satisfy 
\begin{align}
f_{\alpha}(\sigma_{T|X})  
=&\sum_{x}  P_X(x) \Tr_T \sigma_{T|x} {\cal F}_\alpha[\sigma_{T|X}](x)\Label{BCOV},
\end{align}
In this paper, we employ the following function:
\begin{align}\Label{F-function}
& {\cal F}_\alpha[\sigma_{T|X}](x)  \nonumber \\
:=&  
-\log \sigma_{T}[\sigma_{T|X}]+\alpha\log \sigma_{T|x}  \nonumber \\
&+\beta \Tr_Y \Big( \rho_{Y|x}\Big(
\log  (\sigma_T[\sigma_{T|X}] \otimes \rho_Y )-
\log \sigma_{YT} [\sigma_{T|X}]
\Big) \Big) .
\end{align}
Then, the condition \eqref{BCOV} is satisfied.

Using this function, we can define 
$J_0(\sigma_{T|X},\sigma_{T|X}'):=
\Tr_T \sum_{x} \sigma_{T|x} P_X(x) {\cal F}_\alpha[\sigma_{T|X}'](x)$, which  satisfies the condition \eqref{BCRR}.
However, 
it is difficult to optimize two input variables alternately
in the function $J_0(\sigma_{T|X},\sigma_{T|X}')$.
Instead, for $\gamma >0$,
we introduce the following function
\begin{align}
& J_{\gamma,\alpha}( \sigma_{T|X},\sigma_{T|X}') \\
:= &
\gamma D( \sigma_{T|X}\| \sigma_{T|X}')
+\sum_{x} P_X(x) \Tr_T \sigma_{T|x} {\cal F}_\alpha[\sigma_{T|X}'](x),\Label{CPTV}
\end{align}
where 
$D( \sigma_{T|X}\| \sigma_{T|X}'):=\sum_{x}P_X(x)D( \sigma_{T|x}\| \sigma_{T|x}') 
$ and
$D( \sigma_{T|x}\| \sigma_{T|x}')$ denotes the relative entropy.

Next, we need to specify the rules of the alternatively updating $\sigma_{T|X},\sigma_{T|X}'$. Crucially, we need to ensure that $J_{\gamma,\alpha}(\sigma_{T|X},\sigma_{T|X}')$ is non-increasing under the updating rules.
To this purpose, we first introduce the following condition:
\begin{description}
\item[(A1)]
$\sigma_{T|X}$ and $\sigma_{T|X}'$ satisfy
the relation
\begin{align}
& \gamma \sum_{x}P_X(x)D( \sigma_{T|x}\|\sigma_{T|x}')
 \nonumber \\
\ge & \sum_{x} P_X(x) \Tr_T \sigma_{T|x} 
({\cal F}_\alpha[\sigma_{T|X}](x) -{\cal F}_\alpha[\sigma_{T|X}'](x))\Label{XLOV} .
\end{align}
\end{description}

In fact, the condition (A1) is rewritten as $
\gamma \ge \gamma(\sigma_{T|X},\sigma_{T|X}')$
by defining $\gamma(\sigma_{T|X},\sigma_{T|X}')$ as
\begin{align}
&\gamma(\sigma_{T|X},\sigma_{T|X}')  \nonumber \\
:=&
\frac{\sum_{x} P_X(x) \Tr_T \sigma_{T|x} 
({\cal F}_\alpha[\sigma_{T|X}](x) -{\cal F}_\alpha[\sigma_{T|X}'](x))}
{\sum_{x}P_X(x)D( \sigma_{T|x}\|\sigma_{T|x}')}.
\end{align}
This quantity is evaluated as
\begin{align}
\gamma(\sigma_{T|X},\sigma_{T|X}')
\le \alpha \Label{AMX}
\end{align}
because the relation
\begin{align}
& D (\rho_{YT} [\sigma_{T|X}]\|\rho_{YT} [\sigma_{T|X}'])
\ge D (\sigma_T[\sigma_{T|X}] \|
\sigma_T[\sigma_{T|X}'] ) \nonumber \\
=& D (\sigma_T[\sigma_{T|X}] \otimes \rho_Y\|
\sigma_T[\sigma_{T|X}'] \otimes \rho_Y)
\end{align}
implies the relation
\begin{align}
&\sum_{x}P_X(x)  \Tr_T \sigma_{T|x} 
({\cal F}_\alpha[\sigma_{T|X}](x) -{\cal F}_\alpha[\sigma_{T|X}'](x)) \nonumber \\
=& -D(\sigma_{T}[\sigma_{T|X}] \| \sigma_{T}[\sigma_{T|X}']) \nonumber \\
&+\alpha \sum_{x}P_X(x)D( \sigma_{T|x}\|\sigma_{T|x}') \nonumber \\
&+\beta D (\sigma_T[\sigma_{T|X}] \otimes \rho_Y\|
\sigma_T[\sigma_{T|X}'] \otimes \rho_Y)  \nonumber \\
&-\beta D (\rho_{YT} [\sigma_{T|X}]\|\rho_{YT} [\sigma_{T|X}'])
 \nonumber \\
\le &-D(\sigma_{T}[\sigma_{T|X}] \| \sigma_{T}[\sigma_{T|X}']) \nonumber \\
&+\alpha \sum_{x}P_X(x)D( \sigma_{T|x}\|\sigma_{T|x}') \nonumber \\
\le & \alpha \sum_{x}P_X(x)D( \sigma_{T|x}\|\sigma_{T|x}')
.\Label{BLZV}
\end{align}

To state our updating rules, we define 
\begin{align}
\hat{\sigma}_{\gamma,\alpha,T}[\sigma_{T|X}](x):=&
\exp\Big(\log \sigma_{T|x}-\frac{1}{\gamma}{\cal F}_\alpha[\sigma_{T|X}](x) \Big) \Label{BA1V}\\
\hat{\eta}_{\gamma,\alpha|x}[\sigma_{T|X}]
:=&\Tr \hat{\sigma}_{\gamma,\alpha,T}[\sigma_{T|X}](x) \Label{BAAV}\\
\hat{\sigma}_{\gamma,\alpha,T|x}[\sigma_{T|X}]:=
&\frac{1}{\hat{\eta}_{\gamma,\alpha}[\sigma_{T|X}](x)} 
\hat{\sigma}_{\gamma,\alpha,T}[\sigma_{T|X}](x).\Label{BA2V}
\end{align}
In particular, when $\gamma=\alpha$, 
the operator $\hat{\sigma}_{\gamma,\alpha,T}[\sigma_{T|X}](x)$
is simplified as
\begin{align}
&\hat{\sigma}_{\alpha,T}[\sigma_{T|X}](x)  \nonumber \\
=& \exp \Big(
\frac{1-\beta}{\alpha}\log \sigma_{T}[\sigma_{T|X}]
 \nonumber \\
&-\frac{\beta}{\alpha} \Tr_Y \rho_{Y|x} \Big( 
\log  
\rho_Y 
-
\log \sigma_{YT} [\sigma_{T|X}]
 \Big) 
\Big).
\end{align}

\begin{theo}\Label{LBCOV}
Under the condition (A1), we have
\begin{align}
J_{\gamma,\alpha}(\sigma_{T|X},\sigma_{T|X}') \ge & 
J_{\gamma,\alpha}(\sigma_{T|X},\sigma_{T|X})  \Label{BCOV2}\\
J_{\gamma,\alpha}( \sigma_{T|X},\sigma_{T|X}') \ge 
& J_{\gamma,\alpha}(
\hat{\sigma}_{\gamma,\alpha,T|X}[\sigma_{T|X}'],
\sigma_{T|X}').\Label{BCO2V}
\end{align}
\end{theo}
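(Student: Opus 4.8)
The two displayed inequalities are monotonicity statements for $J_{\gamma,\alpha}$ under the two alternating updates of the algorithm, and they can be proved independently of one another. Inequality \eqref{BCOV2} says that, once (A1) holds, moving the reference (second) argument $\sigma_{T|X}'$ onto the state (first) argument $\sigma_{T|X}$ cannot increase $J_{\gamma,\alpha}$; inequality \eqref{BCO2V} says that, for a fixed reference $\sigma_{T|X}'$, the explicit channel $\hat\sigma_{\gamma,\alpha,T|X}[\sigma_{T|X}']$ of \eqref{BA1V}--\eqref{BA2V} is the global minimizer of the state argument. The plan is to prove \eqref{BCOV2} first and then \eqref{BCO2V}.

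Inequality \eqref{BCOV2} is essentially immediate. Since $D(\sigma_{T|x}\|\sigma_{T|x})=0$, definition \eqref{CPTV} gives $J_{\gamma,\alpha}(\sigma_{T|X},\sigma_{T|X})=\sum_x P_X(x)\Tr_T\sigma_{T|x}{\cal F}_\alpha[\sigma_{T|X}](x)$; subtracting this from $J_{\gamma,\alpha}(\sigma_{T|X},\sigma_{T|X}')$ and using $D(\sigma_{T|X}\|\sigma_{T|X}')=\sum_x P_X(x)D(\sigma_{T|x}\|\sigma_{T|x}')$ leaves exactly $\gamma\sum_x P_X(x)D(\sigma_{T|x}\|\sigma_{T|x}')-\sum_x P_X(x)\Tr_T\sigma_{T|x}\big({\cal F}_\alpha[\sigma_{T|X}](x)-{\cal F}_\alpha[\sigma_{T|X}'](x)\big)$, which is $\ge 0$ by the defining inequality \eqref{XLOV} of (A1). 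Note that neither the explicit form \eqref{F-function} of ${\cal F}_\alpha$ nor the identity \eqref{BCOV} is needed for this half.

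For \eqref{BCO2V} I would exploit that ${\cal F}_\alpha[\sigma_{T|X}'](x)$ does not depend on the first argument of $J_{\gamma,\alpha}$. Hence, with $\sigma_{T|X}'$ held fixed, $\sigma_{T|X}\mapsto J_{\gamma,\alpha}(\sigma_{T|X},\sigma_{T|X}')$ decouples over $x$ into the separate minimizations of $g_x(\tau):=\gamma D(\tau\|\sigma_{T|x}')+\Tr_T\tau\,{\cal F}_\alpha[\sigma_{T|X}'](x)$ over density operators $\tau$ on $T$. Writing $A_x:=\gamma\log\sigma_{T|x}'-{\cal F}_\alpha[\sigma_{T|X}'](x)$ (a Hermitian operator) one has $g_x(\tau)=\gamma\Tr_T\tau\log\tau-\Tr_T\tau A_x$, a Gibbs-type functional; expanding the relative entropy against $\exp(A_x/\gamma)$ gives the identity $g_x(\tau)=\gamma D\!\big(\tau\,\|\,\exp(A_x/\gamma)/\Tr_T\exp(A_x/\gamma)\big)-\gamma\log\Tr_T\exp(A_x/\gamma)$, so $g_x$ is minimized, uniquely, at $\tau_x^\star:=\exp(A_x/\gamma)/\Tr_T\exp(A_x/\gamma)$. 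Finally $\exp(A_x/\gamma)=\exp\!\big(\log\sigma_{T|x}'-\frac1\gamma{\cal F}_\alpha[\sigma_{T|X}'](x)\big)=\hat\sigma_{\gamma,\alpha,T}[\sigma_{T|X}'](x)$ by \eqref{BA1V}, and its trace-normalization is $\hat\sigma_{\gamma,\alpha,T|x}[\sigma_{T|X}']$ by \eqref{BAAV}--\eqref{BA2V}; summing the minimized $g_x$'s with weights $P_X(x)$ yields \eqref{BCO2V}. This half uses only the explicit update formula, not (A1).

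The analytic core is thus the single matrix identity $\gamma\Tr\tau\log\tau-\Tr\tau A=\gamma D\!\big(\tau\,\|\,e^{A/\gamma}/\Tr e^{A/\gamma}\big)-\gamma\log\Tr e^{A/\gamma}$ (the Gibbs variational principle), which I would isolate as a one-line lemma proved by expanding $D$ and using $D(\cdot\|\cdot)\ge 0$. I do not anticipate a real obstacle beyond two bookkeeping points: well-definedness of the logarithms, which requires $\sigma_{T|x}'$, $\sigma_T[\sigma_{T|X}']$, $\sigma_{YT}[\sigma_{T|X}']$ and $\rho_Y$ to have full support (automatic from the form \eqref{BA1V}--\eqref{BA2V} of the iterates, or else restrict each minimization to the relevant support); and Hermiticity of $A_x$, so that the Gibbs principle applies — if ${\cal F}_\alpha[\sigma_{T|X}'](x)$ is not manifestly self-adjoint one may replace it by its Hermitian part without altering any trace paired with the Hermitian $\tau$. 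In short, the theorem reduces to two invocations of nonnegativity of the relative entropy, one packaged as (A1) and one as the Gibbs variational principle.
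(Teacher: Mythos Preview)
Your proof is correct and follows essentially the same route as the paper. For \eqref{BCOV2} both you and the paper simply subtract and invoke (A1); for \eqref{BCO2V} the paper derives the identity you call the Gibbs variational principle as its equation \eqref{ACO}, rewriting $J_{\gamma,\alpha}(\sigma_{T|X},\sigma_{T|X}')=\gamma\sum_x P_X(x)\,D(\sigma_{T|x}\|\hat\sigma_{\gamma,\alpha,T|x}[\sigma_{T|X}'])-\gamma\sum_x P_X(x)\log\hat\eta_{\gamma,\alpha|x}[\sigma_{T|X}']$ and concluding by nonnegativity of relative entropy, which is exactly your argument with $A_x=\gamma\log\sigma_{T|x}'-{\cal F}_\alpha[\sigma_{T|X}'](x)$. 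Your Hermiticity worry is unnecessary here: since $\rho_{Y|x}$ acts only on $Y$, the partial trace $\Tr_Y\big((I_T\otimes\rho_{Y|x})\,H\big)$ of a Hermitian $H$ on $TY$ is automatically Hermitian on $T$.
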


\begin{proofof}{Theorem \ref{LBCOV}}
The condition (A1) yields
\begin{align}
&J_{\gamma,\alpha}(\sigma_{T|X},\sigma_{T|X}) \nonumber \\
=
& \sum_{t}\Tr \sigma_{T|x}P_X(x) {\cal F}_\alpha[\sigma_{T|X}](x)  \nonumber \\
\le & \sum_{x}\Tr\sigma_{T|x}P_X(x) {\cal F}_\alpha[\sigma_{T|X}'](x,t) \nonumber  \\
&+ 
\gamma \sum_{x}P_X(x)D( \sigma_{T|x}\|\sigma_{T|x}')  \nonumber \\
=& J_{\gamma,\alpha}(\sigma_{T|X},\sigma_{T|X}').
\end{align}
Hence, we obtain \eqref{BCOV2}.

Also, we have
\begin{align}
&J_{\gamma,\alpha}( \sigma_{T|X},\sigma_{T|X}') \nonumber \\
\stackrel{(a)}{=}& \gamma \sum_{x}P_X(x)
\Tr \sigma_{T|x} \Big( \log \sigma_{T|x}- \log \sigma_{T|x}'  \nonumber \\
&+\frac{1}{\gamma} {\cal F}_\alpha[\sigma_{T|X}'](x) \Big) \nonumber \\
\stackrel{(b)}{=}& \gamma \sum_{x}P_X(x)
\Tr \sigma_{T|x} \Big( \log \sigma_{T|x}- 
{\hat{\sigma}_{\gamma,\alpha,T}[\sigma_{T|X}](x)}
\Big) \nonumber \\
\stackrel{(c)}{=}& \gamma \sum_{x}P_X(x)\Big(
\Tr \sigma_{T|x} \Big( \log \sigma_{T|x}- \log 
\hat{\sigma}_{\gamma,\alpha,T|x}[\sigma_{T|X}']\Big)  \nonumber \\
&-\log \hat{\eta}_{\gamma,\alpha}[\sigma_{T|X}'](x)\Big)  \nonumber \\
=& \gamma \sum_{x}P_X(x)
\big( D(\sigma_{T|x}\| 
\hat{\sigma}_{\gamma,\alpha,T|x}[\sigma_{T|X}']) \big)  \nonumber \\
&- 
\gamma \sum_{x}P_X(x)\log \hat{\eta}_{\gamma,\alpha|x}[\sigma_{T|X}']
\Label{ACO},
\end{align}
where $(a)$, $(b)$, and $(c)$ follow from 
\eqref{CPTV}, \eqref{BA1V}, and \eqref{BA2V},
 respectively. Finally, from Eq.~(\ref{ACO}) we can see that the minimum of $J_{\gamma,\alpha}( \sigma_{T|X},\sigma_{T|X}')$ is achieved when $\sigma_{T|X}=\hat{\sigma}_{\gamma,\alpha,T|x}[\sigma_{T|X}']$, since the first term of (\ref{ACO})  is non-negative (with equality achieved when $\sigma_{T|X}=\hat{\sigma}_{\gamma,\alpha,T|x}[\sigma_{T|X}']$) and the second term is independent of $\sigma_{T|X}$.
Hence,
 we obtain \eqref{BCO2V}. 
\end{proofof}

\begin{cor}\Label{CC8}
Assume that $\gamma \ge 
\sup_{\sigma_{T|X},\sigma_{T|X}'}\gamma(\sigma_{T|X},\sigma_{T|X}')$.
When $\sigma_{T|X}$ is a local minimizer, we have 
\begin{align}
 \hat{\sigma}_{\gamma,\alpha,T|x}[\sigma_{T|X}]=
\sigma_{T|X},\Label{AAV}
\end{align}
which is equivalent to \eqref{MM6} when $\alpha=1$.
\end{cor}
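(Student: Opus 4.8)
The plan is to exploit the majorization structure already built into Theorem~\ref{LBCOV}. The hypothesis $\gamma\ge\sup_{\sigma_{T|X},\sigma_{T|X}'}\gamma(\sigma_{T|X},\sigma_{T|X}')$ ensures that condition (A1) holds for \emph{every} pair of c-q channels, so both inequalities of Theorem~\ref{LBCOV} hold unconditionally. First I would fix a local minimizer $\sigma_{T|X}^{\ast}$ of $f_\alpha$ and study the one-variable function $g:=J_{\gamma,\alpha}(\,\cdot\,,\sigma_{T|X}^{\ast})$. By \eqref{BCOV2} (with second argument $\sigma_{T|X}^{\ast}$) together with \eqref{BCRR}, we have $g(\sigma_{T|X})\ge f_\alpha(\sigma_{T|X})$ for all c-q channels $\sigma_{T|X}$, whereas $g(\sigma_{T|X}^{\ast})=J_{\gamma,\alpha}(\sigma_{T|X}^{\ast},\sigma_{T|X}^{\ast})=f_\alpha(\sigma_{T|X}^{\ast})$. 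In other words, $g$ majorizes $f_\alpha$ and touches it at $\sigma_{T|X}^{\ast}$.

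Second, I would read off from the rewriting \eqref{ACO} that $g(\sigma_{T|X})=\gamma\sum_x P_X(x)\,D\big(\sigma_{T|x}\,\big\|\,\hat\sigma_{\gamma,\alpha,T|x}[\sigma_{T|X}^{\ast}]\big)+c$, with $c$ independent of $\sigma_{T|X}$. Because $\hat\sigma_{\gamma,\alpha,T|x}[\sigma_{T|X}^{\ast}]$ is, by \eqref{BA1V}, the exponential of a Hermitian operator, it is positive definite and, by \eqref{BAAV}--\eqref{BA2V}, normalized; hence $g$ is strictly convex on the convex set of c-q channels and attains its unique minimum there at $\sigma_{T|x}=\hat\sigma_{\gamma,\alpha,T|x}[\sigma_{T|X}^{\ast}]$ for every $x$ in the support of $P_X$. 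On the other hand, since $g\ge f_\alpha$ with equality at $\sigma_{T|X}^{\ast}$ and $\sigma_{T|X}^{\ast}$ locally minimizes $f_\alpha$, we get $g(\sigma_{T|X})\ge f_\alpha(\sigma_{T|X})\ge f_\alpha(\sigma_{T|X}^{\ast})=g(\sigma_{T|X}^{\ast})$ for all $\sigma_{T|X}$ near $\sigma_{T|X}^{\ast}$, so $\sigma_{T|X}^{\ast}$ is a local minimizer of $g$; strict convexity over a convex set upgrades this to global minimality. Comparing the two descriptions of the minimizer of $g$ gives $\sigma_{T|X}^{\ast}=\hat\sigma_{\gamma,\alpha,T|X}[\sigma_{T|X}^{\ast}]$, which is precisely \eqref{AAV}.

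For the equivalence with \eqref{MM6} when $\alpha=1$, I would substitute \eqref{F-function} into \eqref{BA1V}: the relation $\hat\sigma_{\gamma,1,T|x}[\sigma_{T|X}]=\sigma_{T|x}$ holds iff $\log\sigma_{T|x}-\tfrac1\gamma{\cal F}_1[\sigma_{T|X}](x)$ differs from $\log\sigma_{T|x}$ by a scalar, i.e.\ iff ${\cal F}_1[\sigma_{T|X}](x)$ is a scalar multiple of the identity on $T$ (the factor $1/\gamma$ being irrelevant). Expanding $\Tr_Y\big(\rho_{Y|x}\log(\sigma_T[\sigma_{T|X}]\otimes\rho_Y)\big)$ as $\log\sigma_T[\sigma_{T|X}]$ plus a scalar and regrouping turns this condition, term by term, into \eqref{MM6}, with the scalar absorbed into the normalizing constant $C_x$.

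The step I expect to be the main obstacle is the implication ``local minimizer of $f_\alpha$ $\Rightarrow$ global minimizer of $g$'': one must verify that $g$ is finite and strictly convex on a full feasible neighbourhood of $\sigma_{T|X}^{\ast}$ (which is where positive-definiteness of $\hat\sigma_{\gamma,\alpha,T|x}[\sigma_{T|X}^{\ast}]$ is used) and that $\hat\sigma_{\gamma,\alpha,T|X}[\sigma_{T|X}^{\ast}]$ is an admissible competitor in the minimization. A packaging that avoids any appeal to differentiability of $f_\alpha$ is to argue by contradiction along the segment $\sigma(t):=(1-t)\sigma_{T|X}^{\ast}+t\,\hat\sigma_{\gamma,\alpha,T|X}[\sigma_{T|X}^{\ast}]$, $t\in[0,1]$: \eqref{BCO2V} gives $g(\sigma(1))\le g(\sigma(0))=f_\alpha(\sigma_{T|X}^{\ast})$, so strict convexity of $t\mapsto g(\sigma(t))$ forces $g(\sigma(t))<f_\alpha(\sigma_{T|X}^{\ast})$ for $t\in(0,1)$; then \eqref{BCOV2} yields $f_\alpha(\sigma(t))<f_\alpha(\sigma_{T|X}^{\ast})$ for points $\sigma(t)$ arbitrarily close to $\sigma_{T|X}^{\ast}$, contradicting local minimality unless the segment is trivial, i.e.\ $\hat\sigma_{\gamma,\alpha,T|X}[\sigma_{T|X}^{\ast}]=\sigma_{T|X}^{\ast}$.
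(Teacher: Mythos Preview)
Your argument is correct and is exactly the majorization--minimization reasoning that the paper sets up but does not spell out: in the paper, Corollary~\ref{CC8} is stated without proof (there is only an incomplete, commented-out sketch in the source, and the paragraph following the corollary merely records the chain $f_\alpha(\sigma_{T|X})\ge f_\alpha(\hat\sigma_{\gamma,\alpha,T|X}[\sigma_{T|X}])$, which is the monotonicity \eqref{BPXV} rather than the fixed-point conclusion). Your proof fills this gap cleanly: the majorization $g(\cdot):=J_{\gamma,\alpha}(\cdot,\sigma_{T|X}^\ast)\ge f_\alpha(\cdot)$ with contact at $\sigma_{T|X}^\ast$ comes straight from \eqref{BCOV2} and \eqref{BCRR}, and the identification of the unique minimizer of $g$ via \eqref{ACO} is precisely how the paper proves \eqref{BCO2V}. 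Your alternative ``segment'' packaging is a nice touch, since it sidesteps any worry about differentiability or about $\hat\sigma_{\gamma,\alpha,T|x}[\sigma_{T|X}^\ast]$ being full rank; the only point to keep in mind there is that $g(\sigma_{T|X}^\ast)=f_\alpha(\sigma_{T|X}^\ast)<\infty$ forces ${\rm supp}\,\sigma_{T|x}^\ast\subset{\rm supp}\,\hat\sigma_{\gamma,\alpha,T|x}[\sigma_{T|X}^\ast]$, so the segment stays in the region where $g$ is finite and strictly convex. The equivalence with \eqref{MM6} at $\alpha=1$ is exactly the paper's own rewriting \eqref{ACAC} read in reverse, so no issue there either.
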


\if0
\begin{proof}
$\sigma_{T|X}$ is a local minimizer if and only if
for any small disturbance $\Delta \sigma_{T|X}$, we have
\begin{align}
f_{\alpha}(\sigma_{T|X}+\Delta \sigma_{T|X})
\ge f_{\alpha}(\sigma_{T|X})
\end{align}
In fact, 
\begin{align}
& f_{\alpha}(\sigma_{T|X}+\Delta \sigma_{T|X})-f_{\alpha}(\sigma_{T|X}) \\
=&J_{\gamma,\alpha} (\sigma_{T|X}+\Delta \sigma_{T|X},\sigma_{T|X}+\Delta \sigma_{T|X}) 
-J_{\gamma,\alpha} (\sigma_{T|X},\sigma_{T|X}) \\
=&J_{\gamma,\alpha} (\sigma_{T|X}+\Delta \sigma_{T|X},\sigma_{T|X}+\Delta \sigma_{T|X}) 
-J_{\gamma,\alpha} (\sigma_{T|X},\sigma_{T|X}) \\
\ge 
\end{align}
\end{proof}
\fi

When $\gamma \ge \gamma(\hat{\sigma}_{\gamma,\alpha,T|X}[\sigma_{T|X}],\sigma_{T|X})$,
the following chain of inequalities hold: $
f_{\alpha}(\sigma_{T|X}) {=} J_{\gamma,\alpha}( \sigma_{T|X},\sigma_{T|X})  \ge 
J_{\gamma,\alpha}(\hat{\sigma}_{\gamma,\alpha,T|X}[\sigma_{T|X}],\sigma_{T|X})
\ge 
J_{\gamma,\alpha}(\hat{\sigma}_{\gamma,\alpha,T|X}[\sigma_{T|X}],
\hat{\sigma}_{\gamma,\alpha,T|X}[\sigma_{T|X}])=f_{\alpha}(\hat{\sigma}_{\gamma,\alpha,T|X}[\sigma_{T|X}])$.
Hence, the monotonicity of the information bottleneck under the updating rules is also guaranteed, as long as $\gamma$ is sufficiently large.
Finally, we propose the following algorithm with a fixed $\gamma$ and general $\alpha$:
\begin{algorithm}[H]
\caption{QIB algorithm}
\Label{protocol1V}
\begin{algorithmic}[1]
\STATE {\bf Input:} A joint state $\rho_{XY}$ [as in Eq.~(\ref{joint-state})]. 
\STATE {Randomly choose an initial c-q channel $\sigma_{T|X}^{(1)}$; }  
\STATE Create a counter $n$ as the number of iterations; initialize $n$ to 1. 
\REPEAT 
\STATE Choose $\sigma_{T|X}^{(n+1)}$ as $\hat{\sigma}_{\gamma,\alpha,T|X}[\sigma_{T|X}^{(n)}]$ [cf.~Eqs.~(\ref{BA1V}) and (\ref{BA2V})]; set $n$ as $n+1$.
\UNTIL{convergence.}
\STATE {\bf Output:} {A c-q channel $\sigma_{T|X}^{(n+1)}$}
\end{algorithmic} 
\end{algorithm}
 
As mentioned, when $\gamma$ satisfies the condition (A1) in all iteration steps, i.e., 
when $\gamma$ is sufficiently large, 
Theorem \ref{LBCOV} guarantees the monotonicity of the information bottleneck function:
\begin{align}
f_{\alpha}(\sigma_{T|X}^{(n+1)})  \le
J_{\gamma,\alpha}( \sigma_{T|X}^{(n+1)},\sigma_{T|X}^{(n)}) \le
f_{\alpha}(\sigma_{T|X}^{(n)}) .\Label{BPXV}
\end{align}
Since $f_{\alpha}$ consists of bounded entropic quantities (assuming the system to be finite), it is a bounded quantity. Therefore, the sequence $\{f_{\alpha}(\sigma_{T|X}^{(n)})\}$ in our Algorithm converges. In addition, we can show that the sequence of c-q channels $\{\sigma_{T|X}^{(n)}\}$ converges as well:

\begin{theo}\label{TH3}
When $\gamma \ge 
\sup_{\sigma_{T|X},\sigma_{T|X}'}\gamma(\sigma_{T|X},\sigma_{T|X}')$, 
the sequence $\{\sigma_{T|X}^{(n)}\}$ converges. 
\end{theo}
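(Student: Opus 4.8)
\emph{Sketch of a plan.}\quad The plan is to combine the monotonicity already established with a compactness argument and a quasi-Fej\'er estimate, in four steps. \textit{Step 1 (a quantitative descent bound).} Under the hypothesis on $\gamma$, condition (A1) holds along the whole orbit, so \eqref{BPXV} gives $f_{\alpha}(\sigma_{T|X}^{(n+1)})\le f_{\alpha}(\sigma_{T|X}^{(n)})$, and boundedness of $f_{\alpha}$ forces $f_{\alpha}(\sigma_{T|X}^{(n)})\to f^{*}$ for some $f^{*}$. Evaluating \eqref{ACO} at the coinciding pair $(\sigma_{T|X}^{(n)},\sigma_{T|X}^{(n)})$ gives $f_{\alpha}(\sigma_{T|X}^{(n)})=\gamma D(\sigma_{T|X}^{(n)}\|\sigma_{T|X}^{(n+1)})-\gamma\sum_{x}P_X(x)\log\hat{\eta}_{\gamma,\alpha|x}[\sigma_{T|X}^{(n)}]$, while evaluating it at $(\sigma_{T|X}^{(n+1)},\sigma_{T|X}^{(n)})$ gives $J_{\gamma,\alpha}(\sigma_{T|X}^{(n+1)},\sigma_{T|X}^{(n)})=-\gamma\sum_{x}P_X(x)\log\hat{\eta}_{\gamma,\alpha|x}[\sigma_{T|X}^{(n)}]$; inserting these into \eqref{BPXV} sharpens it to
\begin{align}
\gamma\, D(\sigma_{T|X}^{(n)}\|\sigma_{T|X}^{(n+1)})\le f_{\alpha}(\sigma_{T|X}^{(n)})-f_{\alpha}(\sigma_{T|X}^{(n+1)}).
\end{align}
Telescoping yields $\sum_{n}D(\sigma_{T|X}^{(n)}\|\sigma_{T|X}^{(n+1)})<\infty$, hence $D(\sigma_{T|X}^{(n)}\|\sigma_{T|X}^{(n+1)})\to0$ and, by Pinsker's inequality, $\|\sigma_{T|x}^{(n)}-\sigma_{T|x}^{(n+1)}\|_{1}\to0$ for every $x$ with $P_X(x)>0$.

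\textit{Step 2 (accumulation points are fixed points).} The set of c-q channels $\sigma_{T|X}$ is compact, so $\{\sigma_{T|X}^{(n)}\}$ has an accumulation point $\sigma_{T|X}^{*}$. The update map $\sigma_{T|X}\mapsto\hat{\sigma}_{\gamma,\alpha,T|X}[\sigma_{T|X}]$ defined by \eqref{BA1V}--\eqref{BA2V} is continuous wherever the states entering it are faithful, and its Gibbs form keeps the iterates faithful once $\sigma_{T|X}^{(1)}$ is; combining this with Step 1, every accumulation point satisfies $\hat{\sigma}_{\gamma,\alpha,T|X}[\sigma_{T|X}^{*}]=\sigma_{T|X}^{*}$ (which is \eqref{MM6} when $\alpha=1$, by Corollary \ref{CC8}), and $f_{\alpha}(\sigma_{T|X}^{*})=f^{*}$ by continuity of $f_{\alpha}$. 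This much is automatic; what remains is to exclude the orbit drifting among several such fixed points.

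\textit{Step 3 (quasi-Fej\'er monotonicity).} I would prove that for every fixed point $\sigma_{T|X}^{*}$ the sequence $D(\sigma_{T|X}^{*}\|\sigma_{T|X}^{(n)})$ is eventually non-increasing, hence convergent. The tool is the Bregman three-point identity for the quantum relative entropy, together with the first-order optimality of $\sigma_{T|X}^{(n+1)}$ as the minimizer of $J_{\gamma,\alpha}(\,\cdot\,,\sigma_{T|X}^{(n)})$ — i.e.\ that $\log\sigma_{T|x}^{(n+1)}-\log\sigma_{T|x}^{(n)}+\frac{1}{\gamma}{\cal F}_{\alpha}[\sigma_{T|X}^{(n)}](x)$ is a multiple of the identity operator, which is precisely \eqref{BA1V}. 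These reduce the claim $D(\sigma_{T|X}^{*}\|\sigma_{T|X}^{(n)})\ge D(\sigma_{T|X}^{*}\|\sigma_{T|X}^{(n+1)})$ to an inequality of the shape $J_{\gamma,\alpha}(\sigma_{T|X}^{(n+1)},\sigma_{T|X}^{(n)})\ge\sum_{x}P_X(x)\Tr\sigma_{T|x}^{*}{\cal F}_{\alpha}[\sigma_{T|X}^{(n)}](x)$, which I would discharge by feeding in the estimate \eqref{BLZV} (this is where (A1), i.e.\ $\gamma\ge\gamma(\sigma_{T|X},\sigma_{T|X}')$, enters), the data-processing bounds $D(\rho_{YT}[\sigma_{T|X}]\|\rho_{YT}[\sigma_{T|X}^{(n)}])\le D(\sigma_{T|X}\|\sigma_{T|X}^{(n)})$ and $D(\sigma_{T}[\sigma_{T|X}]\|\sigma_{T}[\sigma_{T|X}^{(n)}])\le D(\sigma_{T|X}\|\sigma_{T|X}^{(n)})$ (each obtained by discarding subsystems of $\sum_{x}P_X(x)|x\rangle\langle x|\otimes\rho_{Y|x}\otimes\sigma_{T|x}$), and the fact $f_{\alpha}(\sigma_{T|X}^{(n+1)})\ge f^{*}=f_{\alpha}(\sigma_{T|X}^{*})$ from Step 1. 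Granting Step 3, the theorem follows: choosing a subsequence with $\sigma_{T|X}^{(n_{k})}\to\sigma_{T|X}^{*}$ gives $D(\sigma_{T|X}^{*}\|\sigma_{T|X}^{(n_{k})})\to0$, so the (now convergent) sequence $D(\sigma_{T|X}^{*}\|\sigma_{T|X}^{(n)})$ tends to $0$, and Pinsker's inequality once more gives $\sigma_{T|X}^{(n)}\to\sigma_{T|X}^{*}$.

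The main obstacle is Step 3: extracting a clean sign in the quasi-Fej\'er estimate means carefully balancing the $\alpha$- and $\beta$-contributions thrown up by \eqref{BLZV} and exploiting that both $D(\sigma_{T}[\cdot]\|\cdot)$ and $D(\rho_{YT}[\cdot]\|\cdot)$ are dominated by $D(\sigma_{T|X}\|\cdot)$. If the direct estimate proves stubborn, a robust alternative is to invoke the real-analyticity of $f_{\alpha}$ and the associated Kurdyka--Lojasiewicz inequality, which together with the descent bound of Step 1 and the relative-error estimate $\|{\cal F}_{\alpha}[\sigma_{T|X}^{(n)}]-{\cal F}_{\alpha}[\sigma_{T|X}^{(n+1)}]\|=O(\|\sigma_{T|X}^{(n)}-\sigma_{T|X}^{(n+1)}\|)$ forces $\{\sigma_{T|X}^{(n)}\}$ to be a Cauchy sequence. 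A secondary technical point, needed for the continuity arguments in Steps 2--3, is ruling out iterates drifting to rank-deficient states; this is automatic under the standing faithfulness assumptions on $\rho_{XY}$ and on $\sigma_{T|X}^{(1)}$.
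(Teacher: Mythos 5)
Your Step 1 reproduces the paper's entire proof of Theorem \ref{TH3}: the paper derives exactly your descent bound
\begin{align*}
\gamma \sum_{x}P_X(x)\, D(\sigma_{T|x}^{(n)}\|\sigma_{T|x}^{(n+1)}) \le f_{\alpha}(\sigma_{T|X}^{(n)})-f_{\alpha}(\sigma_{T|X}^{(n+1)})
\end{align*}
from \eqref{ACO} and the monotonicity \eqref{BPXV}, observes that the right-hand side tends to zero, and then simply asserts that $\{\sigma_{T|X}^{(n)}\}$ is a Cauchy sequence. You are right to distrust that final inference: telescoping gives $\sum_n D(\sigma_{T|X}^{(n)}\|\sigma_{T|X}^{(n+1)})<\infty$, and Pinsker's inequality converts this only into $\sum_n \|\sigma_{T|x}^{(n)}-\sigma_{T|x}^{(n+1)}\|_1^2<\infty$; summability of the squared distances does not imply summability of the distances (consider $D_n=n^{-2}$), so vanishing consecutive steps alone do not force the Cauchy property --- the orbit could in principle creep along a continuum of fixed points.

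Your Steps 2--3 therefore correctly identify the missing ingredient, and the quasi-Fej\'er (or Kurdyka--Lojasiewicz) route is the standard repair. As submitted, however, Step 3 is a plan rather than a proof: the decisive inequality $D(\sigma_{T|X}^{*}\|\sigma_{T|X}^{(n)})\ge D(\sigma_{T|X}^{*}\|\sigma_{T|X}^{(n+1)})$ is reduced to an estimate you explicitly leave undischarged. Note that the paper's own identity \eqref{NMO1} is precisely the three-point identity your Step 3 calls for, and it shows that the sign you need hinges on the term $\sum_x P_X(x)\Tr\sigma_{T|x}^{*}\bigl({\cal F}_\alpha[\sigma_{T|X}^{*}](x)-{\cal F}_\alpha[\sigma_{T|X}^{(n)}](x)\bigr)$; condition (A1) applied to the pair $(\sigma_{T|X}^{*},\sigma_{T|X}^{(n)})$ bounds this term only from below by $-\gamma D(\sigma_{T|X}^{*}\|\sigma_{T|X}^{(n)})$, which has the wrong sign, so the obstacle you name is real. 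In short: the part of your argument you complete coincides with what the paper actually proves, and the part you leave open is exactly the gap the paper's proof silently skips over.
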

In particular, since $\alpha \ge \sup_{\sigma_{T|X},\sigma_{T|X}'}\gamma(\sigma_{T|X},\sigma_{T|X}')$, 
the sequence $\{\sigma_{T|X}^{(n)}\}$ converges with $\gamma=\alpha$.

\begin{proof}
Since $\{f_{\alpha}(\sigma_{T|X}^{(n)}) \}$
is monotonically decreasing for $n$,
we have 
\begin{align}
\lim_{n \to \infty}f_{\alpha}(\sigma_{T|X}^{(n)}) -
f_{\alpha}(\sigma_{T|X}^{(n+1)})  =0\Label{AAS}.
\end{align}
Using \eqref{ACO}, we have
  \begin{align}
&f_{\alpha}(\sigma_{T|X}^{(n)}) 
=J_{\gamma,\alpha}( \sigma_{T|X}^{(n)},\sigma_{T|X}^{(n)})  \nonumber \\
=& \gamma \sum_{x}P_X(x)
 D(\sigma_{T|x}^{(n)}\| 
\sigma_{T|x}^{(n+1)} )
+
J_{\gamma,\alpha}( \sigma_{T|X}^{(n+1)},\sigma_{T|X}^{(n)}) \nonumber \\
\ge & 
\gamma \sum_{x}P_X(x)
D(\sigma_{T|x}^{(n)}\| 
\sigma_{T|x}^{(n+1)} )
+f_{\alpha}(\sigma_{T|X}^{(n+1)}) .
\end{align}
Thus, we have
\begin{align}
 \gamma \sum_{x}P_X(x)
 D(\sigma_{T|x}^{(n)}\| 
\sigma_{T|x}^{(n+1)} ) 
\le 
f_{\alpha}(\sigma_{T|X}^{(n)}) -
f_{\alpha}(\sigma_{T|X}^{(n+1)})  \Label{AAS3}.
\end{align}
Since
due to \eqref{AAS} and \eqref{AAS3},
the sequence $\{\sigma_{T|X}^{(n)}\}$ is a Cauchy sequence, it converges. 
\end{proof}

We remark that it is free to choose the convergence criterion in Algorithm \ref{protocol1V}. 


In Algorithm \ref{protocol1V}, $\gamma$ is fixed to be a large enough value. 
Intuitively (see the next paragraph for more detailed discussion), $\gamma$ (or, more precisely, $1/\gamma$) is an acceleration parameter that makes the algorithm converge faster if chosen to be a smaller value.

To begin with, we show the role of $\gamma$ in convergence of the algorithm. Denote by $\sigma_{T|X}^*$  
the convergence point 
of $\{\sigma_{T|X}^{(n)}\}$. 
The performance of our algorithm can be characterized {by} the 
decreasing speed of 
the average divergence between $\sigma_{T|X}^*$ and 
$\sigma_{T|X}^{(n)}$, which is evaluated as
\begin{align}
&\sum_{x}P_X(x)D( \sigma_{T|x}^*\| \sigma_{T|x}^{(n)} ) 
-\sum_{x}P_X(x)D( \sigma_{T|x}^*\| \sigma_{T|x}^{(n+1)} )  \nonumber \\
=&
\sum_{x} P_X(x) \Tr \sigma_{T|x}^*
\Big(\log \sigma_{T|x}^* - \log \sigma_{T|x}^{(n)}\Big)  \nonumber \\
&-
\sum_{x} P_X(x) \Tr \sigma_{T|x}^*
 \Big(\log \sigma_{T|x}^* -\log \sigma_{T|x}^{(n+1)} \Big)
 \nonumber \\
=&\sum_{x} 
P_X(x) \Tr \sigma_{T|x}^*
 \Big(\log \sigma_{T|x}^{(n+1)}- \log \sigma_{T|x}^{(n)}\Big)  \nonumber \\
\stackrel{(a)}{=}&
\sum_{x} 
P_X(x) \Tr \sigma_{T|x}^*
\Big(-\frac{1}{\gamma}
{\cal F}_\alpha[\sigma_{T|X}^{(n)}](x) 
- \log\hat{\eta}_{\gamma,\alpha}[\sigma_{T|X}^{(n)}](x)
\Big) \nonumber \\
\stackrel{(b)}{=}& 
\frac{1}{\gamma}
J_{\gamma,\alpha}( \sigma_{T|X}^{(n+1)},\sigma_{T|X}^{(n)}) 
-{\frac{1}{\gamma}}
\sum_{x} 
P_X(x) \Tr \sigma_{T|x}^*
{\cal F}_\alpha[\sigma_{T|X}^{(n)}](x) 
 \nonumber \\
\stackrel{(c)}{=}& 
\frac{1}{\gamma}\Big(
(J_{\gamma,\alpha}( \sigma_{T|X}^{(n+1)},\sigma_{T|X}^{(n)}) 
-f_{\alpha}(\sigma_{T|X}^*)) \nonumber \\
&+
\sum_{x} 
P_X(x) \Tr \sigma_{T|x}^*
\Big( {\cal F}_\alpha[\sigma_{T|X}^*](x)
- {\cal F}_\alpha[\sigma_{T|X}^{(n)}](x) \Big) \Big),
\Label{NMO1} 
\end{align}
where $(a)$, $(b)$, and $(c)$ follow from 
the combination of \eqref{BA1V} and \eqref{BA2V}, 
\eqref{ACO}, 
and \eqref{BCOV2},
respectively.

The above discussion manifests that if $
\frac{1}{\gamma}
\bigg((
J_{\gamma,\alpha}( \sigma_{T|X}^{(n+1)},\sigma_{T|X}^{(n)}) 
-f_{\alpha}(\sigma_{T|X}^*))+
 \sum_{x} P_X(x) \Tr \sigma_{T|x}^* \Big( {\cal F}_\alpha[\sigma_{T|X}^*](x)
- {\cal F}_\alpha[\sigma_{T|X}^{(n)}](x) \Big)  \bigg)>0$,
making $\gamma$ smaller makes 
the average divergence between $\sigma_{T|X}^*$ and $\sigma_{T|X}^{(n)}$ decrease faster.
On the other hand, making $\gamma$ too small leads to a risk of violating the condition \eqref{XLOV} (and, consequently, breaking the monotonicity of $J_{\gamma,\alpha}$).

\begin{rema}
The reference \cite[Section III]{Ramakrishnan} considered a general setting.
If $\sigma_{T|X}$ is a single density matrix, our method can be considered as a special case of their setting.
However, since $\sigma_{T|X}$ is classical-quantum channel in our case,
our analysis is not a special case of 
their setting.
\end{rema}

\begin{rema}
The references \cite[Appendix A]{PhysRevA.94.012338}
\cite[Appendix A]{8513885} considered the case when the systems $X,Y,T$
are quantum systems and $\alpha=1$.
They derived a necessary condition
for the solution of the minimization problem 
by using Lagrange multiplier method in the same way as \cite{1054753,1054855}.
Using the obtained condition, they 
\cite[Appendix C]{PhysRevA.94.012338}
\cite[Appendix C]{8513885}
also proposed an iterative algorithm to find a solution to satisfy the necessary condition.
It seems that their necessary condition is the same as 
\eqref{AAV} with $\gamma=\alpha=1$.
However, they did not discuss the convergence to a local minimizer in their algorithm.
\end{rema}

\subsection{Numerics on the effects of different $\gamma$}\Label{sec-var-gamma}
To see the effect of different $\gamma$, let us take a look at a concrete example: 
Consider a single-qubit quantum system $Y$ and a classical register $X$ with size $2^8$.
Then, we assume that
$P_X$ is the uniform distribution over ${\cal X}=\{0, \ldots, 2^8-1\}$, and
the density $\rho_{Y|x}$ is given as 
$\rho_{Y|x}=\rho(\theta_{x},\lambda_{x})$, where
\begin{align}
\rho(\theta,\lambda)&:=\exp\left(i\theta\sigma_x\right)\left(\begin{matrix}1-\lambda & 0 \\0 & \lambda\end{matrix}\right)\exp\left(-i\theta\sigma_x\right),
\label{numerics-rhox}
\end{align}
where $\sigma_x=\left(\begin{matrix}0 & 1 \\1 & 0\end{matrix}\right)$ is the Pauli-$X$ matrix.
The parameters $\theta_{x}$ and $\lambda_{x}$ are randomly chosen.

\if0
\begin{figure}[tb]{}
\begin{center} 
\includegraphics[width=0.6\linewidth]{bloch.png}
\end{center}
\caption{{\bf Bloch representation of the estimated ensemble $\{\rho(\theta_{x},\lambda_{x})\}$.}  }\Label{fig-bloch2} 
\end{figure}
\fi
Then, the ensemble we consider admits the following joint density matrix:
\begin{align}\Label{model-select-state-post-measurementB}
\hat{\rho}_{XY}&=\sum_{x}P_{X}(x)|\pi(x)\>\<\pi(x)|\otimes 
\rho\left(\theta_{x},\lambda_{x}\right)
\end{align} 
with $\rho\left(\theta_{x},\lambda_{x}\right)$ given by Eq.~(\ref{numerics-rhox}).

Now, we apply our QIB algorithm (i.e., Algorithm \ref{protocol1V}) to the ensemble (\ref{model-select-state-post-measurementB}). {We consider a classical $T$ whose size is the square root of $|\cal{X}|$ (i.e., $|{\cal T}|=2^4$).} We set $\alpha=1$, and $\beta=10$. Our focus will be the effects of different choices of the acceleration parameter $\gamma$. As shown in Fig.~\ref{fig-IB-gamma}, the choice of $\gamma$ is crucial for the performance, more specifically, the efficiency and the convergence, of the QIB algorithm.

Two interesting phenomena are manifested by our numerics: For one thing, choosing a smaller $\gamma$ will accelerate the course of convergence. As shown in Fig.~\ref{fig-IB-gamma}, by choosing a suitably smaller value of $\gamma$ (e.g., $0.8$ or $0.5$), our QIB algorithm achieves convergence faster than the existing QIB algorithm \cite{PhysRevA.94.012338,8513885}, which corresponds to Algorithm \ref{protocol1V} with $\gamma=1$. For the other, choosing a too small $\gamma$ will ruin the convergence property of the QIB algorithm. For instance, when $\gamma$ is chosen to be $0.4$, $f_{\alpha}$ jumps up after a few iterations and ends up in a much larger value than its initial value. 

\begin{figure}[H]
\begin{center} 
\includegraphics[width=\linewidth]{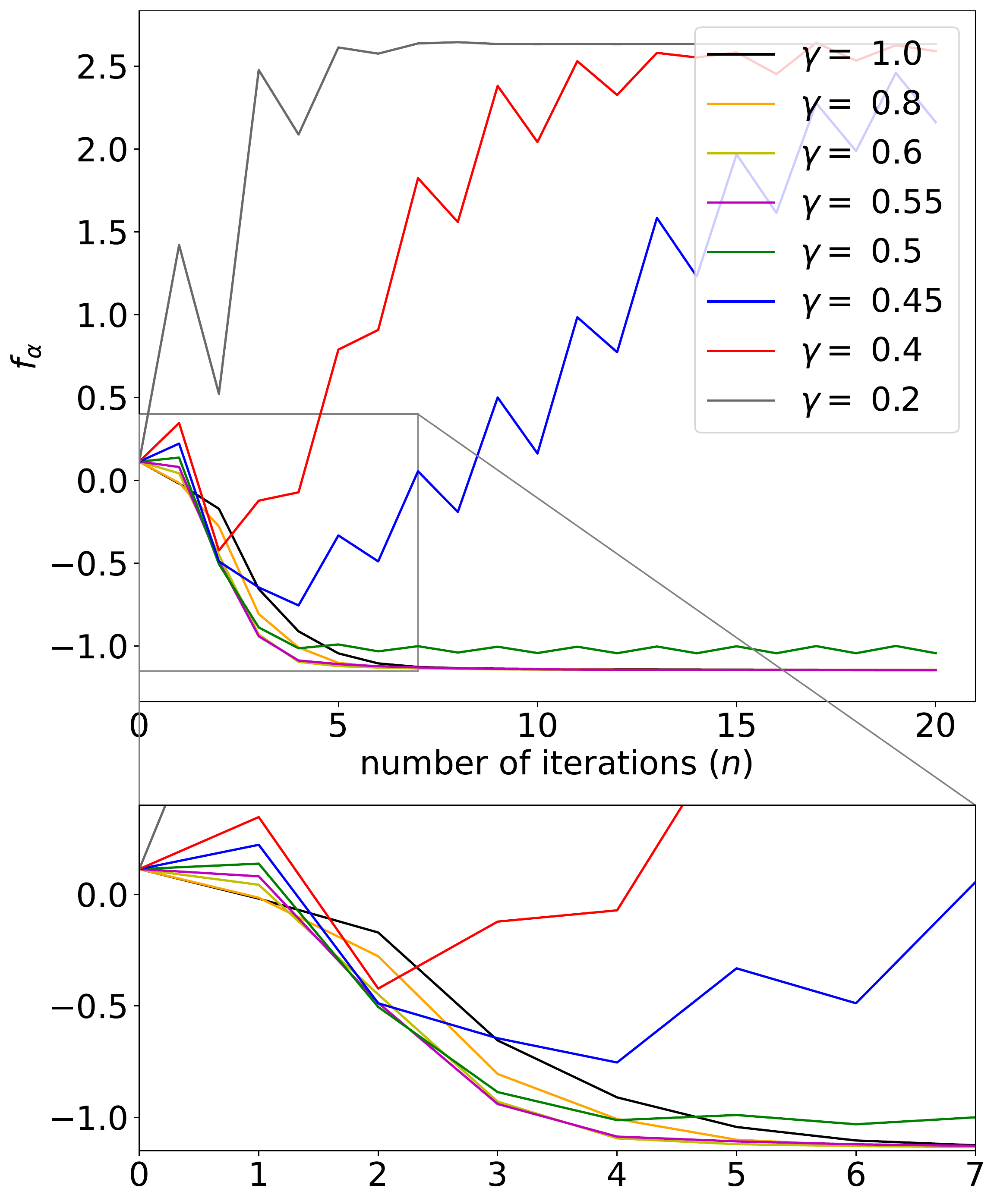} 
\end{center}
\caption{{\bf Performance of Algorithm \ref{protocol1V} for different $\gamma$.}  We apply Algorithm \ref{protocol1V} 
($|{\cal T}|=16$, $\alpha=1$, and $\beta=10$) 
to the joint state (\ref{model-select-state-post-measurementB}). 
The information bottleneck $f_{\alpha}$ is plotted as a function of the number of iterations for different values of $\gamma$. 
The green curve with $\gamma=0.55$ converges most quickly.
It significantly improves the convergence speed in comparison with the black line with $\gamma=1$.
The blue curve with $\gamma=0.45$ goes down even faster in the beginning but gets overtaken after a few iterations. 
Finally, it goes up around $n=7$.
It shows that $\gamma=0.45$ does not satisfy the condition (A1) for $n \ge 7$.
}\Label{fig-IB-gamma} 
\end{figure}

In conclusion, the numerics has justified our theoretical analysis (see Section \ref{S2-C}) 
on the importance of choosing a suitable $\gamma$. We emphasize that our contribution in this direction is twofold:
\begin{enumerate}
    \item We proposed a method of accelerating the QIB algorithm, making it converge within fewer rounds of iteration, by introducing a new parameter $\gamma$ and setting it to be smaller than one.
    \item We showed that the QIB algorithm cannot achieve the desired minimal value of $f_{\alpha}$ if $\gamma$ is too small.  
\end{enumerate}

\subsection{Choice of $\beta$}\Label{sec-beta}
The output of our QIB algorithms depend not only on $\rho_{XY}$ [cf.~(\ref{joint-state})] but also on the choice of $\alpha$ and $\beta$. Intuitively, a larger $\beta$ improves the faithfulness (as it makes $I(Y:T)$ more significant in $f_{\alpha}$), while a smaller $\beta$ leads to more compression (as it makes $I(X:T)$ more significant in $f_{\alpha}$).
Somehow surprisingly, the choice of $\beta$ is not completely free:
In the following, we show that the QIB algorithm will yield a trivial $\sigma_{T|X}$ if $\beta$ is too small.

To consider the relation between the choice of $\beta$ and the resultant information on $T$, we introduce the following condition for a subset ${\cal S} \subset {\cal S}_{X\to T}$,
where ${\cal S}_{X\to T}$ is the the set of all c-q channels from $X$ to $T$, i.e., the set 
$\{ \sigma_{T|X}=(\sigma_{T|x})_{x \in {\cal X}}\}$:
\begin{description}
\item[(A2)] 
For any two distinct elements $\sigma_{T|X}, \sigma_{T|X}'\in
{\cal S}$, $\sum_{x} P_X(x) \Tr_T \sigma_{T|x} 
({\cal F}_\alpha[\sigma_{T|X}](x) -{\cal F}_\alpha[\sigma_{T|X}'](x))> 0$
\end{description}

The condition (A2) is unitarily invariant,
i.e., 
the pair $(\sigma_{T|X},\sigma_{T|X}')$
satisfies the condition (A2),
if and only if the pair $(U\sigma_{T|X} U^\dagger,
U\sigma_{T|X}'U^\dagger)$
satisfies the condition (A2) for any unitary $U$ on $T$. 
Hence, we choose ${\cal S}$ as a unitarily invariant subset.

\begin{theo}\Label{NNK}
Assume that a unitarily invariant subset ${\cal S}$
satisfies (A2). 
Let $\sigma_{T|X}^M :=
\argmin_{\sigma_{T|X} }f_{\alpha}(\sigma_{T|X})$ be the solution to the QIB problem.
When $\sigma_{T|X}^M $ belongs to ${\cal S}$,
$\sigma_{T|x}^M$ is the maximally mixed state on $T$ for any $x$.
\if0
The minimum value is calculated as
\begin{align}
f_{\alpha}(\sigma_{T|X}^M)
=
(1-\alpha)\log |\mathcal{T}|.
\end{align}
\fi
\end{theo}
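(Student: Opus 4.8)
The plan is to first determine the structure of the optimal channel at the local level, then use condition (A2) to show that $\sigma_{T|X}^M$ is the \emph{unique} minimizer of $f_\alpha$ within ${\cal S}$, and finally exploit the unitary invariance of both $f_\alpha$ and ${\cal S}$ to conclude that $\sigma_{T|X}^M$ is fixed by every unitary on $T$, hence maximally mixed. For the first part, note that a global minimizer is in particular a local minimizer, so Corollary~\ref{CC8} applies to $\sigma_{T|X}^M$ for any $\gamma$ chosen large enough; such a $\gamma$ exists because $\gamma(\sigma_{T|X},\sigma_{T|X}')\le\alpha$ for all pairs by \eqref{AMX}. This gives $\hat\sigma_{\gamma,\alpha,T|x}[\sigma_{T|X}^M]=\sigma_{T|x}^M$ for every $x$, which by \eqref{BA1V}--\eqref{BA2V} reads $\exp\!\big(\log\sigma_{T|x}^M-\frac{1}{\gamma}{\cal F}_\alpha[\sigma_{T|X}^M](x)\big)=\hat\eta_{\gamma,\alpha|x}[\sigma_{T|X}^M]\,\sigma_{T|x}^M$. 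The left-hand side is positive definite (an exponential of a self-adjoint operator), so $\sigma_{T|x}^M$ is full rank; applying the matrix logarithm to both sides of this operator identity, using $\log(\hat\eta_{\gamma,\alpha|x}[\sigma_{T|X}^M]\,\sigma_{T|x}^M)=\log\hat\eta_{\gamma,\alpha|x}[\sigma_{T|X}^M]\cdot I_T+\log\sigma_{T|x}^M$, and cancelling $\log\sigma_{T|x}^M$, I obtain ${\cal F}_\alpha[\sigma_{T|X}^M](x)=c_x I_T$ with $c_x:=-\gamma\log\hat\eta_{\gamma,\alpha|x}[\sigma_{T|X}^M]$. Substituting this into \eqref{BCOV} yields $\sum_x P_X(x)c_x=f_\alpha(\sigma_{T|X}^M)={\cal I}_{\alpha,\beta}$.

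Next I would apply (A2): for any $\sigma_{T|X}'\in{\cal S}$ distinct from $\sigma_{T|X}^M$, condition (A2) with $\sigma_{T|X}'$ in the first slot and $\sigma_{T|X}^M$ in the second, combined with ${\cal F}_\alpha[\sigma_{T|X}^M](x)=c_x I_T$ and \eqref{BCOV}, collapses to $f_\alpha(\sigma_{T|X}')-\sum_x P_X(x)c_x>0$, i.e.\ $f_\alpha(\sigma_{T|X}')>{\cal I}_{\alpha,\beta}$. Hence $\sigma_{T|X}^M$ is the only element of ${\cal S}$ achieving the optimal value. Now fix any unitary $U$ on $T$: the channel $(U\sigma_{T|x}^M U^\dagger)_x$ lies in ${\cal S}$ by the assumed unitary invariance of ${\cal S}$, and $f_\alpha$ takes the same value on it as on $\sigma_{T|X}^M$, because $H(T)$, $H(T|X)$ and $I(T:Y)$ are all invariant when $\sigma_{T|X}$ is conjugated by a unitary acting on $T$. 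Being a minimizer lying in ${\cal S}$, it must equal $\sigma_{T|X}^M$ by the uniqueness just proved, so $U\sigma_{T|x}^M U^\dagger=\sigma_{T|x}^M$ for every $x$ and every $U$. An operator commuting with all unitaries is a scalar multiple of the identity, and trace normalization then forces $\sigma_{T|x}^M=I_T/|{\cal T}|$ for every $x$.

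The main obstacle is the first step, namely deriving that ${\cal F}_\alpha[\sigma_{T|X}^M](x)$ is proportional to the identity; this rests on Corollary~\ref{CC8} and on injectivity of the matrix exponential restricted to Hermitian operators (equivalently, uniqueness of the Hermitian logarithm). A minor point to dispatch along the way is that a valid $\gamma$ for Corollary~\ref{CC8} exists for every $\alpha\ge0$, which is immediate from \eqref{AMX}. Once this structural fact is in place, the uniqueness argument from (A2) and the symmetrization step are both short.
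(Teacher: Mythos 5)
Your proof is correct and rests on the same three pillars as the paper's: the fixed-point property of the global minimizer (Corollary~\ref{CC8}), condition (A2) applied to the pair consisting of $\sigma_{T|X}^M$ and its unitary conjugate, and the unitary invariance of $f_\alpha$ and of ${\cal S}$. The organization differs in a useful way. The paper plugs $\sigma_{T|X}^{M'}=U\sigma_{T|X}^M U^\dagger$ into the telescoping identity \eqref{NMO1} and uses the fixed-point property to show that $\sum_x P_X(x)\Tr\sigma^M_{T|x}({\cal F}_\alpha[\sigma^M_{T|X}](x)-{\cal F}_\alpha[\sigma^{M'}_{T|X}](x))=0$, which contradicts (A2) unless $\sigma^{M'}_{T|X}=\sigma^M_{T|X}$. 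You instead extract from the fixed-point equation the structural fact ${\cal F}_\alpha[\sigma^M_{T|X}](x)=c_xI_T$, feed it into \eqref{BCOV} to turn the (A2) quantity for the pair $(\sigma'_{T|X},\sigma^M_{T|X})$ into $f_\alpha(\sigma'_{T|X})-f_\alpha(\sigma^M_{T|X})$, and obtain strict uniqueness of the minimizer within ${\cal S}$ as a byproduct; the symmetrization step is then immediate. Your route is arguably more transparent and yields the slightly stronger intermediate statement that ${\cal F}_\alpha[\sigma^M_{T|X}](x)$ is scalar and that the minimum in ${\cal S}$ is uniquely attained. Two caveats apply equally to both arguments: Corollary~\ref{CC8} is stated in the paper without a written proof, so your reliance on it inherits that debt; and the derivation of ${\cal F}_\alpha[\sigma^M_{T|X}](x)=c_xI_T$ (like the paper's use of \eqref{NMO1}) tacitly assumes $\sigma^M_{T|x}$ is full rank so that $\log\sigma^M_{T|x}$ and the update map are well defined --- your positivity argument for full rank presupposes this and is therefore circular, though this is a technicality shared with the original.
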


If $\sigma_{T|x}^M$ is the maximally mixed state for every $x$, $T$ is uncorrelated with $Y$ and does not contain any meaningful information.
{In other words},
when the assumption for Theorem \ref{NNK} holds,
the solution of the QIB problem is not useful.
Hence, we need to choose the parameters $\alpha, \beta$
such that condition (A2) does not hold.

Now we discuss how to avoid the condition (A2).
The LHS of (A2) is evaluated as
\begin{widetext}
\begin{align}
&\sum_{x} P_X(x) \Tr_T \sigma_{T|x} 
({\cal F}_\alpha[\sigma_{T|X}](x) -{\cal F}_\alpha[\sigma_{T|X}'](x)) \nonumber \\
=& \Tr_{TY} \sum_{x}P_X(x) (\sigma_{T|x} \otimes \rho_{Y|x})
\Big(
-(\log \sigma_{T}[\sigma_{T|X}]-\log \sigma_{T}[\sigma_{T|X}'])
+\alpha
(\log \sigma_{T|x}- \log \sigma_{T|x}')  \nonumber \\
&+\beta \Big(
(\log  (\sigma_T[\sigma_{T|X}] \otimes \rho_Y )
-\log  (\sigma_T[\sigma_{T|X}'] \otimes \rho_Y ))
-( \log \sigma_{YT} [\sigma_{T|X}]
- \log \sigma_{YT} [\sigma_{T|X}'])
\Big)
\Big)  \nonumber \\
=& \Tr_{TY} \sum_{x}P_X(x) (\sigma_{T|x} \otimes \rho_{Y|x})
\Big(
-(\log \sigma_{T}[\sigma_{T|X}]-\log \sigma_{T}[\sigma_{T|X}'])
+\alpha
(\log P_X(x)\sigma_{T|x}- \log P_X(x)\sigma_{T|x}')  \nonumber \\
&+\beta \Big(
(\log  (\sigma_T[\sigma_{T|X}] \otimes \rho_Y )
-\log  (\sigma_T[\sigma_{T|X}'] \otimes \rho_Y ))
-( \log \sigma_{YT} [\sigma_{T|X}]
- \log \sigma_{YT} [\sigma_{T|X}'])
\Big)
\Big)  \nonumber \\
=& -D(\sigma_{T}[\sigma_{T|X}]\|\sigma_{T}[\sigma_{T|X}']) 
+\alpha D(\sigma_{XT}[\sigma_{T|X}] \| \sigma_{XT}[\sigma_{T|X}'])  \nonumber \\
&- \beta \big(D(\sigma_{YT} [\sigma_{T|X}]\| \sigma_{YT} [\sigma_{T|X}'])
-D(\sigma_{T}[\sigma_{T|X}]\|\sigma_{T}[\sigma_{T|X}'])\big),
\end{align}
\end{widetext}
where $\sigma_{XT}[\sigma_{T|X}]:=\sum_{x}P_X(x)\sigma_{T|x}[\sigma_{T|X}]\otimes |x\rangle \langle x|$.
Since 
$D(\sigma_{YT} [\sigma_{T|X}]\| \sigma_{YT} [\sigma_{T|X}'])
\ge D(\sigma_{T}[\sigma_{T|X}]\|\sigma_{T}[\sigma_{T|X}'])$,
the coefficient of $\beta$ is a negative value.
Hence, {a} smaller $\beta$ has a possibility to satisfy the condition (A2).
That is,
to obtain a useful solution, we need to choose $\beta$ to be a sufficiently large value.

\begin{proofof}{Theorem \ref{NNK}}
Let $U$ be an arbitrary unitary on $\mathcal{T}$.
We define $\sigma_{T|X}^{M'}$ by $\sigma_{T|x}^{M'}=U\sigma_{T|x}^M U^\dagger$.
Substituting {$ \sigma_{T|x}^{(n)}$ with $ \sigma_{T|x}^{M'} $}
{in} \eqref{NMO1}, 
we have 
\begin{align}
0=&\sum_{x}P_X(x)D( \sigma_{T|x}^M\| \sigma_{T|x}^{M'} )  \nonumber \\
&-\sum_{x}P_X(x)D( \sigma_{T|x}^M\| \hat{\sigma}_{\gamma,\alpha,T|x}[\sigma_{T|X}^{M'}] )  \nonumber \\
=& 
\frac{1}{\gamma}
(f_{\alpha}(\sigma_{T|X}^{M'})
-f_{\alpha}(\sigma_{T|X}^M)) \nonumber \\
&+
\frac{1}{\gamma} 
\sum_{x} 
P_X(x) \Tr \sigma_{T|x}^M
\Big( {\cal F}_\alpha[\sigma_{T|X}^M](x)
- {\cal F}_\alpha[\sigma_{T|X}^{M'}](x) \Big) 
\Label{NMO} \\
=&
\frac{1}{\gamma} 
\sum_{x} 
P_X(x) \Tr \sigma_{T|x}^M
\Big( {\cal F}_\alpha[\sigma_{T|X}^M](x)
- {\cal F}_\alpha[\sigma_{T|X}^{M'}](x) \Big) 
.\Label{NMT}
\end{align}
Thus, the condition (A2) implies $\sigma_{T|X}^M=\sigma_{T|X}^{M'}$.
$\sigma_{T|x}^M$ 
is the completely mixed state on $T$ for any $x$.
\end{proofof}

\section{Classical system $T$}\Label{SIV}
Next, we consider the case when $T$ is \emph{constrained} to be  a classical system. We stress that this is a different minimization from the previously discussed one with a quantum 
system $T$, whose minimum may not be attainable with a classical $T$.
Instead, our objective function now is
\begin{align}
{\cal I}_{\alpha,\beta}^c:= \min_{\sigma_{T|X}:diagonal}f_{\alpha}(\sigma_{T|X}).
\end{align} 
Therefore, we need to re-examine the validity of our previous analyses. 

Let us start with the form of QIB algorithm. Fortunately, our algorithm with a quantum 
system $T$ can be applied to this case, simply with the adaptation that the states $\sigma_{T|x}$ are limited to 
diagonal density matrices with respect to the basis $\{|t\rangle\}$
of $T$. 
Under this condition, the states
$\hat{\sigma}_{\gamma,\alpha,T|x}[\sigma_{T|X}]$ are also 
diagonal density matrices.
Therefore, when we set the initial state as diagonal density matrices, 
Algorithm \ref{protocol1V} works for this case.

The above discussion leads to an interesting observation as follows. 
The convergent $\sigma_{T|X}^*$ with initial diagonal $\sigma_{T|X}$ satisfies the condition \eqref{ACAC}
and it is also diagonal.
That is, if the minimum with classical $T$ is strictly larger 
{than} the minimum with quantum $T$,
the minimum with classical $T$ is {an example} for the following statement:
A solution of the condition \eqref{ACAC} does not necessarily give the minimum of 
$f_{\alpha}$ with quantum $T$.
This fact shows the possible risk that a solution to \eqref{ACAC} might be a saddle point or a local minimum rather than the global minimum for $f_{\alpha}$ with quantum $T$. 

When the states $\sigma_{T|x}$ are limited to 
diagonal density matrices with respect to the basis $\{|t\rangle\}$
of $T$, 
$\sigma_{TY} [\sigma_{T|X}]$ is commutative with $ \sigma_{T} [\sigma_{T|X}]$
so that
we can define 
$\sigma_{Y|T} [\sigma_{T|X}]:=\sigma_{TY} [\sigma_{T|X}] \sigma_{T} [\sigma_{T|X}]^{-1}$.
Then,
$\hat{\sigma}_{\gamma,\alpha,T}[\sigma_{T|X}](x)$
is simplified as follows.
\begin{align}
&\log \hat{\sigma}_{\gamma,\alpha,T}[\sigma_{T|X}](x) \nonumber \\
=&(1-\frac{\alpha}{\gamma})\log \sigma_{T|x}
+\frac{1}{\gamma}\log \sigma_{T}[\sigma_{T|X}]
 \nonumber \\
&-\frac{\beta}{\gamma}
\Tr_Y \Big( 
\rho_{Y|x} 
(\log \rho_Y
 -\log \sigma_{Y|T} [\sigma_{T|X}]
  )\Big) .
\end{align}


The notion of unitary invariance is reduced to invariance under permutations on $T$, and the condition (A2) is invariant under permutations on $T$. 
Then, Theorem \ref{NNK} can be rewritten as follows.
\begin{theo}\Label{NNKC}
Assume that a subset ${\cal S}$
satisfies (A2) and 
is invariant under any permutation on $T$. 
Let $\sigma_{T|X}^* $ be the minimizer of $ 
\min_{\sigma_{T|X}:diagonal }f_{\alpha}(\sigma_{T|X})$.
When $\sigma_{T|X}^* $ belongs to ${\cal S}$,
$\sigma_{T|x}^*$ is the 
uniform distribution over $T$ for any $x$. 
\end{theo}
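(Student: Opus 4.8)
The plan is to replay the proof of Theorem~\ref{NNK} verbatim, with the unitary group on $\mathcal{T}$ replaced by the group of permutation matrices of the distinguished basis $\{|t\rangle\}$, and to check that the restriction to diagonal channels costs nothing. First I would fix a permutation $\pi$ of $\mathcal{T}$, let $U_\pi$ be its permutation matrix, and set $\sigma_{T|x}^{*\prime}:=U_\pi\sigma_{T|x}^*U_\pi^\dagger$. Since $\sigma_{T|x}^*$ is diagonal, $\sigma_{T|x}^{*\prime}$ is again diagonal (its diagonal being the probability vector of $\sigma_{T|x}^*$ permuted by $\pi$), so $\sigma_{T|X}^{*\prime}$ is a feasible point of $\min_{\sigma_{T|X}:diagonal}f_{\alpha}(\sigma_{T|X})$. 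Because every entropic term in $f_{\alpha}$ is unchanged under relabelling of $T$, we have $f_{\alpha}(\sigma_{T|X}^{*\prime})=f_{\alpha}(\sigma_{T|X}^*)$, hence $\sigma_{T|X}^{*\prime}$ is also a minimizer of the constrained problem.

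The one step that is not purely mechanical is the diagonal analogue of \eqref{AAV}, namely $\hat{\sigma}_{\gamma,\alpha,T|x}[\sigma_{T|X}^{*\prime}]=\sigma_{T|x}^{*\prime}$ for every $x$. Corollary~\ref{CC8} cannot be cited directly, because $\sigma_{T|X}^{*\prime}$ is optimal only inside the diagonal submanifold, not a local minimizer of $f_{\alpha}$ over all c-q channels. Instead I would fix $\gamma$ with $\gamma>0$ and $\gamma\ge\sup_{\sigma_{T|X},\sigma_{T|X}'}\gamma(\sigma_{T|X},\sigma_{T|X}')$ --- possible by \eqref{AMX} --- and invoke the chain of inequalities preceding Algorithm~\ref{protocol1V} with $\sigma_{T|X}=\sigma_{T|X}^{*\prime}$: $f_{\alpha}(\sigma_{T|X}^{*\prime})=J_{\gamma,\alpha}(\sigma_{T|X}^{*\prime},\sigma_{T|X}^{*\prime})\ge J_{\gamma,\alpha}(\hat{\sigma}_{\gamma,\alpha,T|X}[\sigma_{T|X}^{*\prime}],\sigma_{T|X}^{*\prime})\ge f_{\alpha}(\hat{\sigma}_{\gamma,\alpha,T|X}[\sigma_{T|X}^{*\prime}])$. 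Since $\hat{\sigma}_{\gamma,\alpha,T|X}$ preserves diagonality (Section~\ref{SIV}), $\hat{\sigma}_{\gamma,\alpha,T|X}[\sigma_{T|X}^{*\prime}]$ competes in the constrained problem, so the last term is $\ge f_{\alpha}(\sigma_{T|X}^{*\prime})$ and the chain collapses to equalities. Comparing the first and second members via \eqref{ACO} forces $\sum_x P_X(x)D(\sigma_{T|x}^{*\prime}\|\hat{\sigma}_{\gamma,\alpha,T|x}[\sigma_{T|X}^{*\prime}])=0$, i.e.\ $\hat{\sigma}_{\gamma,\alpha,T|x}[\sigma_{T|X}^{*\prime}]=\sigma_{T|x}^{*\prime}$.

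With this I would substitute $\sigma_{T|x}^{(n)}:=\sigma_{T|x}^{*\prime}$ (so that $\sigma_{T|x}^{(n+1)}=\hat{\sigma}_{\gamma,\alpha,T|x}[\sigma_{T|X}^{*\prime}]=\sigma_{T|x}^{*\prime}$) into the identity \eqref{NMO1}, which holds unchanged for diagonal channels. Its left-hand side equals $\sum_x P_X(x)D(\sigma_{T|x}^*\|\sigma_{T|x}^{*\prime})-\sum_x P_X(x)D(\sigma_{T|x}^*\|\hat{\sigma}_{\gamma,\alpha,T|x}[\sigma_{T|X}^{*\prime}])=0$ by the previous paragraph, while the term $J_{\gamma,\alpha}(\sigma_{T|X}^{(n+1)},\sigma_{T|X}^{(n)})-f_{\alpha}(\sigma_{T|X}^*)$ equals $f_{\alpha}(\sigma_{T|X}^{*\prime})-f_{\alpha}(\sigma_{T|X}^*)=0$ since both channels minimize the constrained problem. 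Hence $\sum_x P_X(x)\Tr\sigma_{T|x}^*\big({\cal F}_\alpha[\sigma_{T|X}^*](x)-{\cal F}_\alpha[\sigma_{T|X}^{*\prime}](x)\big)=0$, which is exactly the quantity occurring in condition (A2) for the pair $(\sigma_{T|X}^*,\sigma_{T|X}^{*\prime})$. Both channels lie in ${\cal S}$ --- $\sigma_{T|X}^*$ by hypothesis and $\sigma_{T|X}^{*\prime}$ by the permutation-invariance of ${\cal S}$ --- so, the quantity being not strictly positive, (A2) forces $\sigma_{T|X}^*=\sigma_{T|X}^{*\prime}$: the probability vector $(\langle t|\sigma_{T|x}^*|t\rangle)_t$ is fixed by $\pi$, for every $x$. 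As $\pi$ ranges over all permutations, that vector is invariant under the full symmetric group on $\mathcal{T}$, hence uniform.

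The main obstacle is the second paragraph: establishing $\hat{\sigma}_{\gamma,\alpha,T|x}[\sigma_{T|X}^{*\prime}]=\sigma_{T|x}^{*\prime}$ for a channel that is optimal only within the diagonal submanifold, since Corollary~\ref{CC8} is stated for unconstrained local minimizers. The resolution --- that $\hat{\sigma}_{\gamma,\alpha,T|X}$ maps diagonal channels to diagonal channels, so the monotonicity chain of Theorem~\ref{LBCOV} never leaves the feasible set --- is the only new idea; the rest is a word-for-word transcription of the proof of Theorem~\ref{NNK} with ``unitary on $T$'' replaced by ``permutation of $T$''.
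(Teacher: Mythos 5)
Your proposal is correct and follows essentially the same route as the paper, which simply states that Theorem~\ref{NNKC} ``can be shown in the same way as Theorem~\ref{NNK}'': you conjugate by permutation matrices instead of general unitaries and rerun the argument around \eqref{NMO1} and condition (A2). The one place where you go beyond a verbatim transcription --- justifying the fixed-point identity $\hat{\sigma}_{\gamma,\alpha,T|x}[\sigma_{T|X}^{*\prime}]=\sigma_{T|x}^{*\prime}$ for a minimizer of the \emph{diagonal-constrained} problem via the diagonality-preservation of the update map, rather than citing Corollary~\ref{CC8} --- is a legitimate and needed clarification that the paper leaves implicit.
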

Theorem \ref{NNKC} can be shown in the same way as Theorem \ref{NNK}.

In this case, we can make a more precise discussion for the condition (A2).
For this purpose,
we consider the maximum ratio
\begin{align}
\kappa:=\max_{Q_X,Q_X'}
\frac{D( \sum_{x}Q_X(x) \rho_{Y|x} \| \sum_{x}Q_X'(x) \rho_{Y|x})}
{D(Q_X\|Q_X')}.
\end{align}
The inequality $\kappa\le 1$ follows from the information processing inequality
for the map $Q_X \mapsto  \sum_{x}Q_X(x) \rho_{Y|x}$.
In this condition, 
$\sigma_{T}[\sigma_{T|X}]
$ is written as $\sum_{t}Q_{T}[ \sigma_{T|X}](t)|t\rangle \langle t|$ by 
using a distribution $Q_{T}[ \sigma_{T|X}]$.
Then,
the LHS of (A2) is simplified as
\begin{align}
&\sum_{x} P_X(x) \Tr_T \sigma_{T|x}  
({\cal F}_\alpha[\sigma_{T|X}](x) -{\cal F}_\alpha[\sigma_{T|X}'](x))  \nonumber \\
=&(\beta -1)D(\sigma_{T}[\sigma_{T|X}]\|\sigma_{T}[\sigma_{T|X}']) \nonumber \\
&+\alpha D(\sigma_{XT}[\sigma_{T|X}] \| \sigma_{XT}[\sigma_{T|X}'])  \nonumber \\
&- \beta D(\sigma_{YT} [\sigma_{T|X}]
\| \sigma_{YT} [\sigma_{T|X}'])  \nonumber \\
=&(\alpha -1) D(\sigma_{T}[\sigma_{T|X}]\|\sigma_{T}[\sigma_{T|X}'])  \nonumber \\
&+\sum_{t} Q_{T}[ \sigma_{T|X}] (t)
\Big(\alpha  D(\sigma_{X|T=t}[\sigma_{T|X}] \| \sigma_{X|T=t}[\sigma_{T|X}']) \nonumber \\
&- \beta D(\sigma_{Y|T=t} [\sigma_{T|X}] \| \sigma_{Y|T=t} [\sigma_{T|X}']) 
\Big) \nonumber \\
\ge &(\alpha -1) D(\sigma_{T}[\sigma_{T|X}]\|\sigma_{T}[\sigma_{T|X}'])  \nonumber \\
&+
(\alpha-\beta \kappa)
\sum_{t} Q_{T}[ \sigma_{T|X}] (t) \nonumber \\
&\cdot D(\sigma_{X|T=t}[\sigma_{T|X}] \| \sigma_{X|T=t}[\sigma_{T|X}']).
\end{align}
When the condition $\alpha \ge 1,
\frac{\alpha}{{\kappa}} > \beta $ holds,
the LHS of (A2) is positive for $ \sigma_{T|X}\neq \sigma_{T|X}'$.
Hence, to extract useful $\sigma_{T|X} $, we need to choose $\beta$ to satisfy the condition
$\beta > \frac{\alpha}{\kappa} $ with $\alpha=1$.
In fact, even when $\beta > \frac{\alpha}{\kappa} $,
there is a possibility that a permutation-invariant subset ${\cal S}$ satisfies (A2).
Due to Theorem \ref{NNKC},
when a permutation-invariant subset ${\cal S}$ satisfies (A2), 
a useful solution does not belong to the subset ${\cal S}$. 
Hence, to obtain a useful solution, we need to choose $\beta$ sufficiently large beyond 
the above condition $\beta> \frac{\alpha}{\kappa} $ with $\alpha=1$.

\begin{rema}
We consider the case with classical $Y$ and $\gamma=\alpha$.
The operator
$\hat{\sigma}_{\alpha,T}[\sigma_{T|X}](x)$
is simplified as follows.
\begin{align}
&\hat{\sigma}_{\alpha,T}[\sigma_{T|X}](x) \nonumber \\
=&
\exp \Big(\frac{1}{\alpha}\log \sigma_{T}[\sigma_{T|X}]
 \nonumber \\
&-\frac{\beta}{\alpha}
\Tr_Y \Big( \rho_{Y|x} 
(\log \rho_Y 
 -\log \sigma_{Y|T} [\sigma_{T|X}]
 )\Big) \Big).
\end{align}

In this case, 
the reference \cite[(14) Section 3]{10.1162/NECO_a_00961} 
proposed the following update rule:
\begin{align}
\hat{\tau}_{T|x}[\sigma_{T|X}]
:=\frac{1}{\Tr \hat{\tau}_{T}[\sigma_{T|X}](x)}
\hat{\tau}_{T}[\sigma_{T|X}](x), \Label{AAR}
\end{align}
where the operator $\hat{\tau}_{T}[\sigma_{T|X}](x)$ is defined as
\begin{align}
&\hat{\tau}_{T}[\sigma_{T|X}](x) \nonumber \\
:=&
\exp \Big(\frac{1}{\alpha}\log \sigma_{T}[\sigma_{T|X}]
 \nonumber \\
&-\frac{\beta}{\alpha}
\Tr_Y \Big( \rho_{Y|x} 
(\log \rho_{Y|x} 
 -\log \sigma_{Y|T} [\sigma_{T|X}]
 )\Big) \Big).
\end{align}
Since
\begin{align}
\log \hat{\tau}_{T}[\sigma_{T|X}](x)
-\log \hat{\sigma}_{T}[\sigma_{T|X}](x) 
=\frac{\beta}{\alpha} D (\rho_{Y|x}\| \rho_{Y}) ,
\end{align}
we have
\begin{align}
&\hat{\tau}_{T|x}[\sigma_{T|X}]  \nonumber \\
=&\frac{1}{\Tr e^{\frac{\beta}{\alpha} D (\rho_{Y|x}\| \rho_{Y})}\hat{\sigma}_{T}[\sigma_{T|X}](x)
}
e^{\frac{\beta}{\alpha} D (\rho_{Y|x}\| \rho_{Y})}\hat{\sigma}_{T}[\sigma_{T|X}](x) \nonumber \\
=&\hat{\sigma}_{T|x}[\sigma_{T|X}] .
\end{align}
That is, the update rule \eqref{AAR} by 
\cite[(14) Section 3]{10.1162/NECO_a_00961} 
is the same as ours of this special case. 
In particular, the update rule \eqref{AAR} with $\alpha=1$
coincides with the update rule by
the reference \cite{tishby1999}.
\end{rema}

\begin{rema}
When the system $Y$ is classical and $\alpha=1$, 
the reference \cite[Appendix B]{PhysRevA.94.012338}
claimed that there is no difference between 
 the optimal value with quantum $T$ and 
 the optimal value with classical $T$.
{Since their algorithm works with $T$ of a fixed size,
it can be considered that 
they claimed the above statement 
when the size of $T$ is fixed.}
{However, their proof (see \cite[Appendix B II]{PhysRevA.94.012338}) contains a gap:
The statement under Eq.~(B23) that ``the Lagrangian is invariant under a measurement of the memory $M$ in a chosen basis $|m\>$'' is not backed by a rigorous mathematical proof.
It is thus unclear whether this statement and, consequently, the claim that there is no quantum advantage are correct.}
On the other hand, as we show next, the optimal value with quantum $T$ can be strictly smaller than the optimal value with classical $T$.
{That is, the claim in \cite[Appendix B]{PhysRevA.94.012338}
contradicts with our result of the next section.}
\end{rema}

\section{Quantum advantage for $T$}\Label{S4}
To see the advantage of quantum system $T$ over classical system $T$,
we discuss several examples with the strict inequality
\begin{align}
{\cal I}_{\alpha,\beta}<  {\cal I}_{\alpha,\beta}^c.\Label{XMP}
\end{align}
We provide an analytical example in this section
 and a numerical example with application in quantum machine learning in Section \ref{S8-B}
{when the size of the system $T$ is fixed.
Generally, to achieve the optimal performance, 
we need to choose the system $T$ as a sufficiently large dimensional system.
However, in this section,
to provide analytical examples,
we fix the size of the system $T$ to a certain value.}


Assume that ${\cal Y}$ is a classical system of size $d$.
The size of ${\cal X}$ is $k$ times of the size $d$ of ${\cal Y}$.
We assume that ${\cal X}$ is given as ${\cal X}_1 \times {\cal X}_2$ 
with ${\cal X}_1={\cal Y}$
and $|{\cal X}_2|=k$.
The distribution of $X$ is assumed to be uniform.
We focus on the quantum system $T$ with the dimension $n< d$.

\begin{lemm}
When $\beta \ge 1 $ and $ \beta\ge  \alpha$, we have
\begin{align}\Label{bound-quantum}
{\cal I}_{\alpha,\beta}= (1-\beta)\log n
\end{align}
\end{lemm}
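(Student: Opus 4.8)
The plan is to prove the equality by matching lower and upper bounds on ${\cal I}_{\alpha,\beta}$, using the (implicit) assumption that $Y$ is a faithful copy of the first coordinate of $X=(X_1,X_2)\in{\cal X}_1\times{\cal X}_2$, i.e.\ $\rho_{Y|(x_1,x_2)}=|x_1\rangle\langle x_1|$; in particular $\rho_Y=I/d$ and $P_Y$ is uniform on ${\cal Y}$.

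\emph{Lower bound.} First I would rewrite the objective. Put $\sigma_{T|y}:=\frac1k\sum_{x_2}\sigma_{T|(y,x_2)}$ (the state of $T$ conditioned on $Y=y$) and $\rho_T:=\sigma_T[\sigma_{T|X}]=\frac1d\sum_y\sigma_{T|y}$. Since $Y$ is classical, $I(T:Y)=H(\rho_T)-\frac1d\sum_y H(\sigma_{T|y})$ and $H(T|X)=\sum_x P_X(x)H(\sigma_{T|x})$, hence
\[
f_\alpha(\sigma_{T|X})=(1-\beta)H(\rho_T)+\beta\,\frac1d\sum_y H(\sigma_{T|y})-\alpha\sum_x P_X(x)H(\sigma_{T|x}).
\]
By concavity of the von Neumann entropy, $\frac1d\sum_y H(\sigma_{T|y})\ge\frac1{dk}\sum_{y,x_2}H(\sigma_{T|(y,x_2)})=\sum_x P_X(x)H(\sigma_{T|x})\ge0$; since $\beta\ge\alpha\ge0$, the last two terms are jointly nonnegative. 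The first term obeys $(1-\beta)H(\rho_T)\ge(1-\beta)\log n$ because $1-\beta\le0$ and $H(\rho_T)\le\log(\dim T)=\log n$. Therefore $f_\alpha(\sigma_{T|X})\ge(1-\beta)\log n$ for every c-q channel, which gives ${\cal I}_{\alpha,\beta}\ge(1-\beta)\log n$.

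\emph{Upper bound.} I would then exhibit a channel that saturates every inequality above, which forces $H(\rho_T)=\log n$, $H(\sigma_{T|x})=0$ for all $x$, and $H(\sigma_{T|y})=0$ for all $y$. Take $\sigma_{T|(y,x_2)}:=|\phi_y\rangle\langle\phi_y|$, independent of $x_2$, with $|\phi_y\rangle:=\frac1{\sqrt n}\sum_{j=0}^{n-1}e^{2\pi ijy/d}|j\rangle$ for $y\in{\cal Y}=\{0,\dots,d-1\}$. Then each $\sigma_{T|x}$ is pure and each $\sigma_{T|y}=|\phi_y\rangle\langle\phi_y|$ is pure, so both entropy sums vanish; and since $0\le j,j'\le n-1<d$ forces $\frac1d\sum_{y=0}^{d-1}e^{2\pi i(j-j')y/d}=\delta_{j,j'}$, one gets $\rho_T=\frac1d\sum_y|\phi_y\rangle\langle\phi_y|=\frac1n\sum_{j=0}^{n-1}|j\rangle\langle j|$, the maximally mixed state, whence $H(\rho_T)=\log n$. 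Substituting into the identity above yields $f_\alpha(\sigma_{T|X})=(1-\beta)\log n$, which matches the lower bound and establishes ${\cal I}_{\alpha,\beta}=(1-\beta)\log n$.

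\emph{Main obstacle.} The only non-routine ingredient is the achievability construction: one needs $d$ unit vectors in $\mathbb{C}^n$ whose uniform mixture is the maximally mixed state (a tight frame / complex projective $1$-design). The roots-of-unity choice above does this explicitly, and this is exactly where the hypothesis $n<d$ is used. The algebraic rewriting of $f_\alpha$ and the concavity step are elementary. (If $\rho_{Y|x}$ were not a faithful copy of $X_1$, the lower bound would still hold but achievability could fail, so the precise model for $\rho_{Y|x}$ is essential here.)
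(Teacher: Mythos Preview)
Your proof is correct and follows essentially the same two-step strategy as the paper: an entropy-inequality lower bound combined with an explicit Fourier-type pure-state construction for achievability. Your lower bound is a cosmetic rephrasing of the paper's---you write $f_\alpha=(1-\beta)H(T)+\beta H(T|Y)-\alpha H(T|X)$ and use $H(T|Y)\ge H(T|X)\ge 0$, while the paper keeps the mutual-information form and uses the equivalent pair $I(T:X)\ge I(T:Y)$ and $I(T:Y)\le H(T)$; the achievability states are the same idea (your phase $e^{2\pi ijy/d}$ is in fact cleaner than the paper's, which has a typographical slip in the exponent), and your verification that $\rho_T=I/n$ via character orthogonality is exactly what is needed.
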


\begin{proof}
First, we show a bound on the QIB for generic (quantum) $T$.
For any $\sigma_{T|x}$, we have 
$H(T)\ge I(T:X) \ge I(T:Y) $.
Hence, the relation $\beta-\alpha \ge 0$ implies
$-(\beta-\alpha) I(T:Y)\ge -(\beta-\alpha) H(T) $.
Hence, we have
\begin{align}
&f_\alpha(\sigma_{T|x})
= (1-\alpha)H(T)
+\alpha I(T:X)
-\beta I(T:Y) \nonumber \\
\ge &
(1-\alpha)H(T)
-(\beta-\alpha) I(T:Y)
\ge  (1-\beta)H(T).\Label{AMR}
\end{align}
Since $H(T) \le \log n  $ and $1-\beta \le 0$, we obtain 
\begin{align} 
{\cal I}_{\alpha,\beta}\ge (1-\beta)\log n.
\end{align}
The above bound is tight. Indeed, we choose $\sigma_{T|x_1,x_2}$ as the pure state
$\sum_{t=1}^n\frac{1}{\sqrt{n}}e^{ \frac{2\pi x_1}{n} i }|t\rangle$.
Then, we have 
$H(T)= \log n$.
Also, $H(T)=I(T:X)=I(T:Y) $. Therefore, $f_{\alpha}(\sigma_{T|x})=(1-\beta)\log n$.
\end{proof}

Next, we focus on the case when $T$ is a classical system of dimension $n< d$. 

\begin{lemm}\label{L7}
Assume that $d=mn+l $ with $0\le l < n$.
When $\beta \ge 1\ge \alpha $, we have
\begin{align}\Label{bound-classical}
{\cal I}_{\alpha,\beta}^c= 
(1-\beta) \Big(\frac{l(m+1)}{d} \log \frac{d}{m+1}
+ \frac{(n-l)m}{d}\log \frac{d}{m}\Big)
\end{align}
\end{lemm}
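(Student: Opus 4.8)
The plan is to prove matching lower and upper bounds, both equal to $(1-\beta)H^*$, where $H^*:=\frac{l(m+1)}{d}\log\frac{d}{m+1}+\frac{(n-l)m}{d}\log\frac{d}{m}$ is the Shannon entropy of the ``most balanced'' partition of the $d$-element set ${\cal Y}$ into $n$ blocks ($l$ of size $m+1$ and $n-l$ of size $m$; note $l(m+1)+(n-l)m=d$). The key structural fact to exploit is that in this model $Y=X_1$ is a deterministic copy of the first component of $X$ while $P_X$ is uniform, so $Y$ is uniform on $\{1,\dots,d\}$ and every diagonal (classical) channel $\sigma_{T|X}$ realizes the Markov chain $T-X-Y$.

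For the lower bound I would first use the data-processing inequality $I(T:X)\ge I(T:X_1)=I(T:Y)$ together with $\alpha\ge0$, which gives $f_{\alpha}(\sigma_{T|X})=(1-\alpha)H(T)+\alpha\big(I(T:X)-I(T:Y)\big)-(\beta-\alpha)I(T:Y)\ge(1-\alpha)H(T)-(\beta-\alpha)I(T:Y)$. Substituting $I(T:Y)=H(T)-H(T|Y)$ rewrites the right-hand side as $(1-\beta)H(T)+(\beta-\alpha)H(T|Y)$. Then I would invoke the hypothesis $\beta\ge1\ge\alpha$: since $\beta-\alpha\ge\beta-1\ge0$ and $H(T|Y)\ge0$, we get $(1-\beta)H(T)+(\beta-\alpha)H(T|Y)\ge(1-\beta)H(T)+(\beta-1)H(T|Y)=(1-\beta)\big(H(T)-H(T|Y)\big)=(1-\beta)I(T:Y)$. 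Because $1-\beta\le0$, it then suffices to show $I(T:Y)\le H^*$ uniformly over all classical $\sigma_{T|X}$.

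Establishing $\max I(T:Y)=H^*$ is the crux, and I expect it to be the main obstacle. I would argue it in three moves: (i) the induced channel $Y\to T$ ranges over all stochastic matrices $P_{T|Y}$ (the auxiliary register $X_2$ only supplies local randomness, and marginalizing it leaves the joint law of $(Y,T)$ unchanged), so it suffices to maximize over $P_{T|Y}$; (ii) for fixed uniform $P_Y$ the map $P_{T|Y}\mapsto I(T:Y)$ is convex, so its maximum over the channel polytope is attained at an extreme point, i.e.\ a deterministic map $g:\{1,\dots,d\}\to\{1,\dots,n\}$, for which $I(T:Y)=H(g(Y))=H\!\big(|g^{-1}(1)|/d,\dots,|g^{-1}(n)|/d\big)$; (iii) since Shannon entropy is Schur-concave and the balanced integer composition of $d$ into $n$ parts is majorized by every other such composition, $H(g(Y))$ is maximized exactly by the balanced partition, with value $H^*$. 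Hence $f_{\alpha}(\sigma_{T|X})\ge(1-\beta)H^*$ for every classical channel, giving ${\cal I}_{\alpha,\beta}^c\ge(1-\beta)H^*$.

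For the matching upper bound I would evaluate $f_{\alpha}$ on the deterministic classical channel $\sigma_{T|x_1,x_2}=|g^*(x_1)\rangle\langle g^*(x_1)|$, where $g^*$ realizes the balanced partition: then $H(T|Y)=0$, $H(T)=I(T:Y)=H^*$, and $I(T:X)=I(T:Y)$ since $T$ depends on $X$ only through $X_1=Y$, so $f_{\alpha}=(1-\beta)H^*+(\beta-\alpha)\cdot0=(1-\beta)H^*$ and thus ${\cal I}_{\alpha,\beta}^c\le(1-\beta)H^*$. Combined with the lower bound this yields \eqref{bound-classical}. The two places to handle with care are the two uses of ``$Y$ is a function of $X$'' (for $I(T:X)\ge I(T:Y)$ in the bound and $I(T:X)=I(T:Y)$ in the example) and the reduction in step (i), which should be written out explicitly.
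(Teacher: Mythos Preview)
Your proof is correct. Both you and the paper start from the same first inequality $f_\alpha(\sigma_{T|X})\ge(1-\alpha)H(T)-(\beta-\alpha)I(T:Y)$ (obtained from $I(T:X)\ge I(T:Y)$), but you then diverge. The paper introduces an auxiliary variable $J$ indexing a decomposition of $\sigma_{T|X}$ into deterministic channels and conditions on $(J,X_2)$, using the independence $I(Y:JX_2)=0$ to obtain $I(T:Y|JX_2)\ge I(T:Y)$ and $H(T|JX_2)\le H(T)$; this reduces the minimization to deterministic channels depending only on $X_1$, after which $f_\alpha=(1-\beta)H(T)$ and one maximizes $H(T)$. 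You instead push the algebra further to the clean pointwise bound $f_\alpha\ge(1-\beta)I(T:Y)$, and then bound $\max I(T:Y)$ over all classical channels by invoking convexity of mutual information in the channel to pass to deterministic maps $g:\{1,\dots,d\}\to\{1,\dots,n\}$, followed by Schur-concavity of entropy. Your route avoids the auxiliary-variable conditioning entirely and makes the reduction to deterministic maps a one-line convexity statement; the paper's route, on the other hand, never needs to invoke convexity of $I(T:Y)$ in $P_{T|Y}$ explicitly. Both achievability arguments are identical (evaluate at the balanced deterministic map). Your remarks about the two places needing care are apt; in particular your step (i) is immediate since any $P_{T|Y}$ is realized by taking $\sigma_{T|x_1,x_2}$ independent of $x_2$, and conversely $I(T:Y)$ depends on $\sigma_{T|X}$ only through the induced $P_{T|Y}$.
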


\begin{proof}
Any channel $\sigma_{T|x}$ can be written as a probabilistic mixture of deterministic channels
$\sigma_{T|x}^j $.
That is, we have
\begin{align}
\sigma_{T|x}=\sum_j p_j \sigma_{T|x}^j.
\end{align}
Since $Y$ is independent of $X_2$ and 
the random variable $J$ describing the choice of $j$,
we have 
\begin{align}
I(T:Y|JX_2) 
=&
I(T:Y|JX_2)+I(Y:JX_2) \nonumber\\
=&I(TJX_2:Y) \ge I(T:Y) . \Label{AM1}
\end{align}
Also, we have
\begin{align}
H(T) \ge H(T|J X_2).\Label{AM2}
\end{align}
Then, we have
\begin{align}
&f_\alpha(\sigma_{T|x}) \stackrel{(a)}{\ge}
(1-\alpha)H(T)
-(\beta-\alpha) I(T:Y) \nonumber \\
\stackrel{(b)}{\ge}& 
(1-\alpha)H(T|J X_2)
-(\beta-\alpha) I(T:Y|J X_2) ,\Label{CME}
\end{align}
where $(a)$ follows from \eqref{AMR}, and $(b)$ follows from 
\eqref{AM1} and \eqref{AM2}.
The minimization of
$(1-\alpha)H(T|J X_2)
-(\beta-\alpha) I(T:Y|J X_2) $
equals the minimization
of the same function under the condition that
$\sigma_{T|X}$ is a deterministic channel
and $\sigma_{T|x_1x_2}$ depends only on $x_1$.

Under this condition, we have 
$I(T:X)=I(T:X_1)=I(T:Y)$, which implies the equality in $(a)$ at \eqref{CME}.
Therefore, for the minimization, we can impose this condition, i.e.,
the variable $T$ is determined only by $X_1=Y$, {which implies 
$I(T:Y)=H(T)$.}
In this case, we have $f_\alpha(\sigma_{T|x})= (1-\beta)H(T)$.
In the classical case, the maximum entropy $H(T) $ among deterministic channels
is achieved when the distribution $(P_T(t))_{t=1}^n$ as close as possible to the uniform distribution, i.e., 
$P_T=(\overbrace{\frac{m+1}{d},\ldots,\frac{m+1}{d}}^l, 
\overbrace{\frac{m}{d},\ldots,\frac{m}{d}}^{n-l})$.
Hence, the maximum entropy $H(T) $ is
$\frac{l(m+1)}{d} \log \frac{d}{m+1}+ \frac{(n-l)m}{d}\log \frac{d}{m}$.
Therefore, we obtain the desired statement. 
\end{proof}

{When the conditions of Lemma \ref{L7} hold,
$d$ cannot be divided by $n$.}
In this case, since $\frac{l(m+1)}{d} \log \frac{d}{m+1}
+ \frac{(n-l)m}{d}\log \frac{d}{m}$ is strictly smaller than
$\log n$, when the state $\rho_{XY}$ is close to the state $\sum_{x}\frac{1}{d}|x,x\rangle \langle x,x|$,
the strict inequality \eqref{XMP} holds. There is clearly an advantage of using a quantum $T$.

\section{Quantum feature maps with QIB}\Label{S8}
\subsection{Information bottleneck in supervised learning}
Supervised learning is a cornerstone of machine learning.
Given a dataset $\{(x,y)\}$ sampled from an unknown probability distribution $P_{XY}$, a general supervised learning task is to find a classifier such that, for any testing data $(x',y')$ sampled from the same distribution $P_{XC}$, it predicts the label $y'$ with as high accuracy as possible given $x'$.

Remarkably, recent studies \cite{tishby2015deep,shwartz2017opening,goldfeld2020information} on the information bottleneck theory    showed evidences that the training phase of deep learning can be divided into two stages. In the first stage, a representation $T$ of $X$ that faithfully encodes its correlation with $Y$ is found, featured by increasing $I(T:Y)$. 
In the second stage, the size of $T$ is compressed, featured by decreasing $I(T:X)$. 
This result suggests that finding an efficient and compressed representation of $X$ facilitates data classification.

\begin{figure}[htb]
\centering
\includegraphics[width=0.95\linewidth]{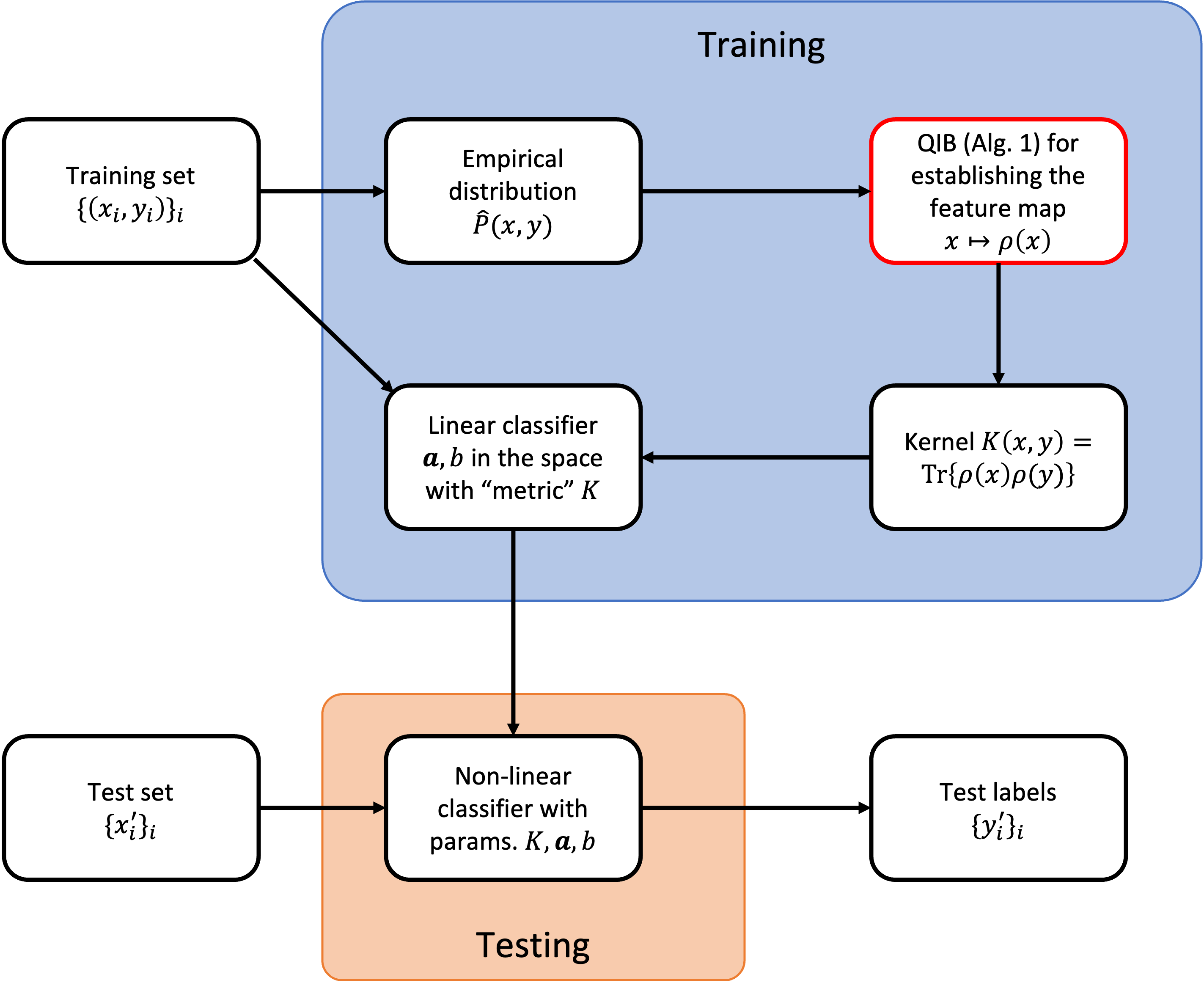}\caption{{\bf Data classification with quantum feature maps.} The flowchart illustrates the training phase and the testing phase of data classification using the technique of quantum feature maps. The part where our QIB algorithm is applied is highlighted.}\Label{fig-qib-classifier} 
\end{figure}

\subsection{Quantum feature maps}\Label{S8-B}
Following the above intuition, we propose a classical-quantum hybrid algorithm of data classification, by combining the QIB algorithm with the kernel method. The idea is illustrated in the flowchart in Fig.~\ref{fig-qib-classifier}.
Given a training dataset $\set{S}_{\rm train}$, the algorithm first identifies an efficient representation $T$ of $X$ by minimising the information bottleneck $f_{\alpha}:=H(T)-\alpha H(T|X)-\beta I(T:Y)$. Then a classifier is constructed that yields a prediction $\hat{Y}$ based on the state in $T$ corresponding to the value of $X$. 
For simplicity, we consider for now the case when $Y\in\{1,-1\}$ is binary. In the first step, we set the representation $T$ to be a quantum state $\rho(x)$ that depends on the data $x$, and we obtain $\rho(x)$ via Algorithm \ref{protocol1V}. 
In the second step, we use a linear classifier 
\begin{align}
c_{\rm QIB}\left(\rho(\tilde{x})\right)={\rm sgn}\left(\Tr[A\rho(\tilde{x})]+b\right)
\end{align}
where $A$ is a Hermitian operator and $b\in\R$. We further consider $A$ that can be expressed as a linear combination $A=\sum_{x:(x,y)\in\set{S}_{\rm train}}a_x \rho(x)$, and the classifier has the reduced form
\begin{align}\Label{qib-classifier}
c_{\rm QIB}\left(\rho(\tilde{x})\right)={\rm sgn}\left(\sum_{x:(x,c)\in\set{S}_{\rm train}}a_x K(x,\tilde{x})+b\right),
\end{align}
where $K(x,\tilde{x})$ is the \emph{kernel} function, in our case given by the Hilbert-Schmidt (HS) inner product of quantum states and can be evaluated by performing the SWAP test on a quantum computer:
\begin{align}\Label{kernel-HS}
K(x,y)=\Tr\{\rho(x)\rho(y)\}.
\end{align} 

The algorithm is summarised as follows:
\begin{algorithm}[H]
\caption{QIB for data classification}
\Label{alg-classification}
\begin{algorithmic}
\STATE {\bf input:} {A training data set $\set{S}_{\rm train}=\{(x,y)\}$}; configuration $(\alpha,\beta,\gamma)$.  
\STATE {\bf input:} {A classifier $c_{\rm QIB}:X\to\hat{Y}$}.
\STATE  {\bf 1)} Generate an empirical distribution $\hat{P}(x,y)$ from $\set{S}_{\rm train}$.
\STATE  {\bf 2)} Run Algorithm \ref{protocol1V} 
with $\hat{P}(x,y)$ as input and certain (adjustable) parameters $\alpha$,$\beta$,$\gamma$.
\STATE  {\bf 3)} Compute the kernel $K$ in Eq.~(\ref{qib-classifier}) using the output of Step 2). 
\STATE  {\bf 4)} Train the classifier (\ref{qib-classifier}) with $\set{S}_{\rm train}$ and output the trained classifier.
\end{algorithmic}
\end{algorithm}

We remark that the quantum kernel method, where a mapping $x\to\rho(x)$ is constructed for better classification, has been a hot topic recently (see, e.g., \cite{schuld2019quantum,havlivcek2019supervised,blank2020quantum,lloyd2020quantum,perez2020data,schuld2021supervised}). The key distinction between existing works and our present method is the following: In existing works, the parameter $x$ is passed to a parametrised (a.k.a. variational) quantum circuit that prepares the state $\rho(x)$. One needs to train the circuit parameters on a quantum computer to obtain a good mapping $x\mapsto\rho(x)$, which is called a feature map. 
In the near term, this method might be subject to the physical limitations of quantum devices. In contrast, in our present method $\rho(x)$ is directly computed via a simple iterative algorithm. Therefore, there are two possible ways of realizing our present method, i.e., Algorithm \ref{alg-classification}. In the near term, we can regard Algorithm \ref{alg-classification} as a ``quantum-inspired'' classical algorithm, and evaluate everything on a classical computer. When large-scale quantum computing becomes feasible, Algorithm \ref{alg-classification} can be readily ``quantised''. 
Indeed, the evaluation of $\rho(x)$ in each iteration requires subroutines that compute matrix powers and logarithm and solve linear systems, which have already been developed in Refs.~\cite{harrow2009quantum,low2019hamiltonian,low2017hamiltonian,gilyen2019quantum}.

\subsection{Numerical experiments}

\begin{figure}[htb]
\begin{center} 
\includegraphics[width=0.95\linewidth]{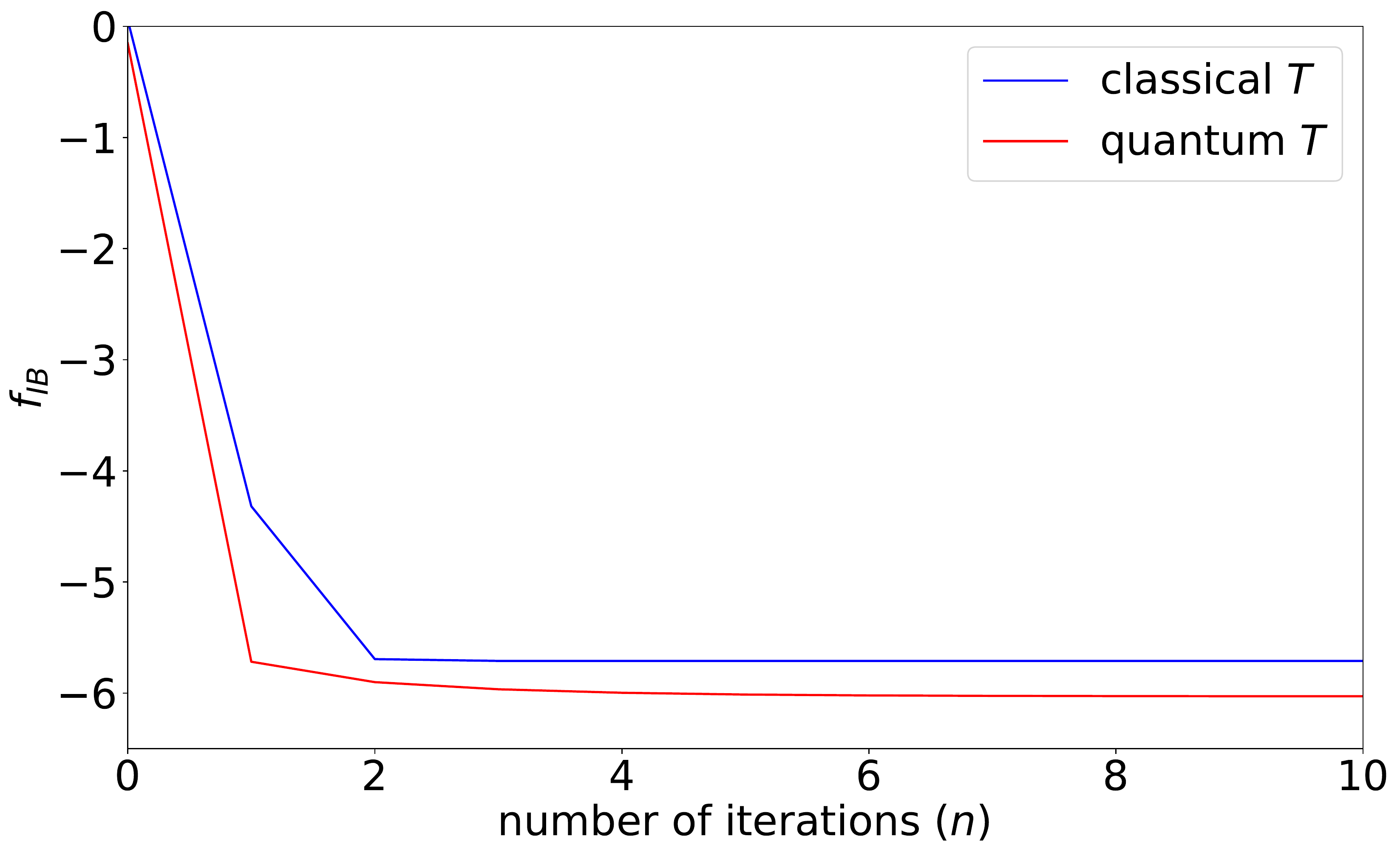} 
\includegraphics[width=0.6\linewidth]{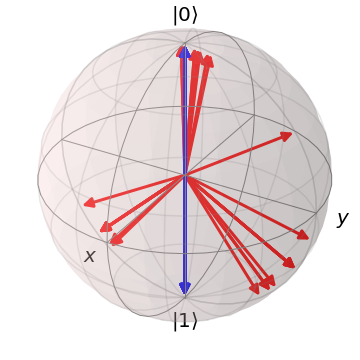} 
\end{center}
\caption{{\bf Quantum vs classical feature maps.}  We run Algorithm \ref{protocol1V} with $\alpha=\gamma=1$, $\beta=15$ on 
the distribution $\tilde{P}_{XY}$ based on the training data, 
and compare the converging values of QIB 
when $T$ is classical (i.e., a probabilistic bit) and when $T$ is quantum (i.e., a single qubit). 
This numerics shows the advantage of use of quantum $T$ over classical $T$.
The final feature maps with quantum $T$ (plotted in red) 
and with the classical-$T$ (plotted in blue) are visualised in the Bloch ball.
}\Label{fig-Q-vs-C-classification} 
\end{figure}
 
As a proof-of-principle experiment, we tested the performance of our QIB classifier on a dataset on $\R^2$, generated in the following way:
First, we define the discrete sets ${\cal X}={\cal X}_1\times {\cal X}_2$ and ${\cal Y}$, 
with ${\cal X}_1={\cal Y}=\{0,1,2\}$ and ${\cal X}_2=\{0,1,\dots,9\}$.
To apply our classification method, 
we arbitrarily choose permutation $\pi$,
and generate $n'=400$ independent and identically distributed data
$(\tilde{X}_{1,i},\tilde{X}_{2,i},Y_i)$ for $i=1, \ldots, n' $ as follows.
We independently generate $(X_{1,i},X_{2,i},Y_i)$ according to 
the following distribution
\begin{align}
P_{XY}(x_1,x_2,y):=P_Y(y)Q_{X_1|Y}(x'_1,y)Q_{X_2|X_1}(x'_2,x'_1),
\end{align}
where $P_Y$ is the uniform distribution over $Y$, 
$Q_{X_1|Y}(x_1,y)=\delta(x_1,y)$, $Q_{X_2|X_1}(x_2,x_1)=\frac{\delta(x_1,x_2)+1}{|{\cal X}_2|+1}$, and $(x'_1,x'_2)=\pi(x_1,x_2)$. 
Next, we generate the random variables 
$\tilde{X}_{j,i}:=X_{j,i}+R_{j,i}$, where 
the random variable $R_{j,i}$ is subject to the uniform distribution in the interval $[0,1.2)$
unless $i=1, X_i=2$ nor $i=2,X_i=9$,
it is subject to the uniform distribution in the interval $[0,1)$
otherwise.
Then, using the obtained data
$(\lfloor\tilde{X}_{1,i}\rfloor,\lfloor\tilde{X}_{2,i}\rfloor,Y_i)$ with $i=1, \ldots, n $,
we define its empirical distribution $\tilde{P}_{XY}$.
We apply Algorithm \ref{protocol1V} to the distribution $\tilde{P}_{XY}$ as Fig. \ref{fig-Q-vs-C-classification}.
In the case with the distribution $\tilde{P}_{XY}$,
 Algorithm \ref{protocol1V} with quantum $T$ can realize a smaller 
 $f_{\alpha}$ than  Algorithm \ref{protocol1V} with classical $T$,
 which shows the advantage of quantum $T$ over classical $T$.

In the classification experiment, $50\%$ of the data are used as the training set and the rest are used as the testing set. The kernel is constructed with Algorithm \ref{alg-classification} with $\alpha=1,\beta=15,\gamma=1$, a single-qubit register $T$, and 10 iterations. 
We consider both when $T$ is a generic qubit system and when $T$ is restricted to a binary classical system, and we compare their performance. As can be seen from Fig.~\ref{fig-Q-vs-C-classification}, the case of quantum $T$ has lower IB value than the case of classical $T$.  The final feature map $\sigma_{T|X}$ for the quantum $T$ case suffers from certain degree of dispersion due to the random noise $r_1,r_2$, but the quantum features still form 3 clusters. In contrast, the final $\sigma_{T|X}$ in the classical $T$ case  maps different values of $X$ into two clusters. 

The effect of the above distinction is made apparent in the classification performance. 
In Fig.~\ref{fig-decision-region}, the performance of the classifiers constructed from the kernels are illustrated via their decision regions. It can be seen that, since the classical-$T$ feature map groups $X$ into two clusters, its resultant classifier gives a binary prediction on any input data, giving up the least possible label. In stark contrast, the quantum-$T$ feature map utilizes the full Bloch ball to generate 3 clusters, leading to a much higher accuracy of prediction. The advantage of a genuinely quantum feature map is thus manifested by this numerical example.

For reference, in Fig.~\ref{fig-decision-region}, we also plot the performance of two standard methods of classical feature maps. The referential methods (linear kernel and polynomial kernel) achieve accuracies (defined by the ratio of correct predictions in the testing set) $0.64$ and $0.62$, which is slightly higher than the classical-$T$ information bottleneck kernel ($0.565$) but much lower than the QIB kernel ($0.92$). This further justifies the superior performance of our QIB method in classification.

\begin{figure}[H]
\begin{center} 
\includegraphics[width=0.95\linewidth]{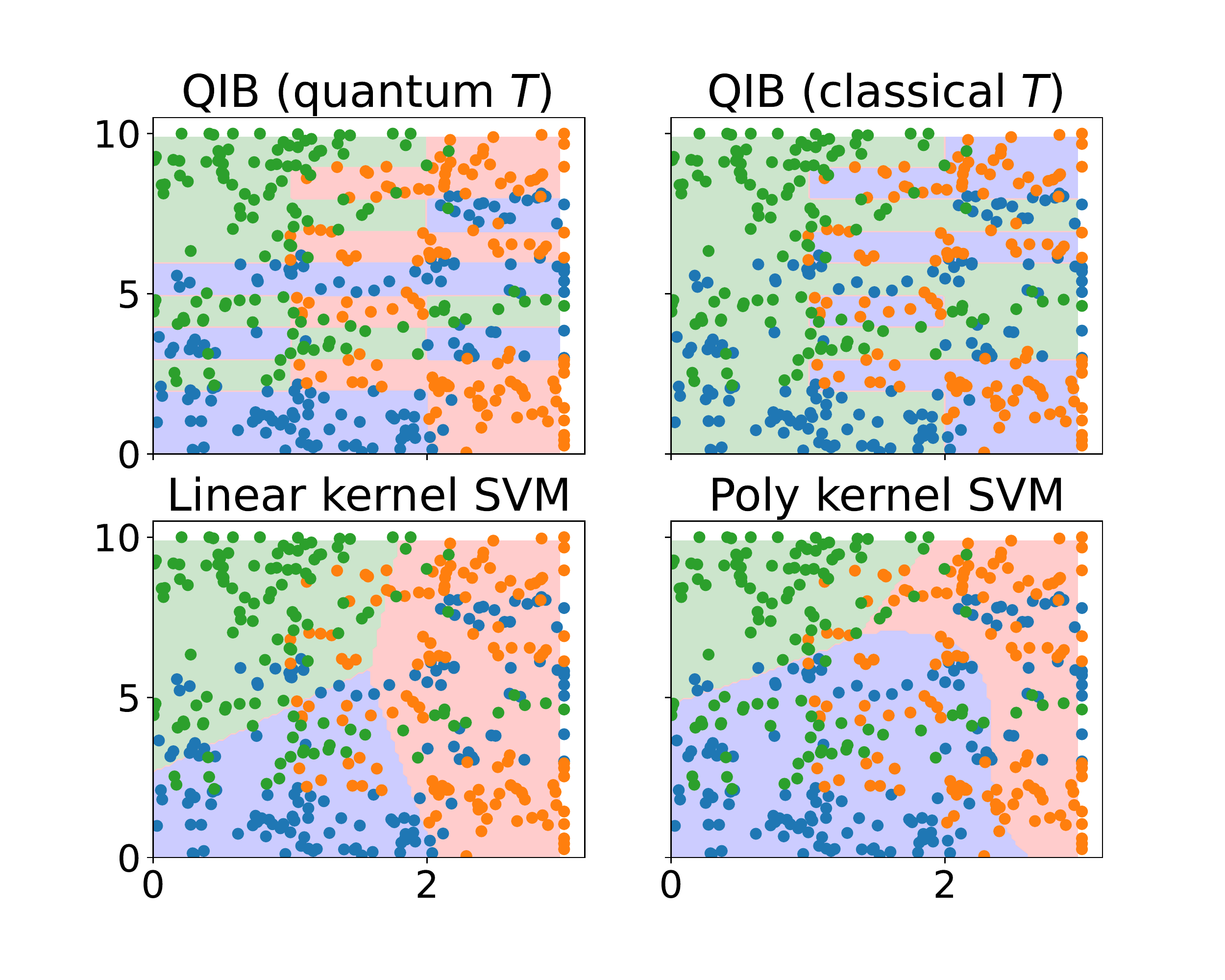}
\end{center}
\caption{{\bf Decision regions of the QIB classifier and reference classifiers.} The decision regions of the QIB classifier, the classical-$T$ IB classifier, and two reference classifiers are plotted together with the test data. The different dot colors correspond to data with different labels, and the color of each region corresponds to the prediction made by the classifier for data in that region.  }
\Label{fig-decision-region}  
\end{figure}

\section{Quantum deterministic information bottleneck (QDIB)}\Label{S6}
Considering the limit $\alpha \to +0$,
the paper \cite{10.1162/NECO_a_00961} proposed  deterministic IB, which minimize $f_0$.
Now, we consider this minimization with quantum systems $T,Y$ and classical system $X$.
First, we define
\begin{align}
&\hat{\sigma}_{0,T|x}[\sigma_{T|X}]  \nonumber \\
:=&
\frac{1}{\Tr \sigma_{T|x} P_{T|x}[\sigma_{T|X}]}
P_{T|x}[\sigma_{T|X}] \sigma_{T|x} P_{T|x}[\sigma_{T|X}],
\end{align}
where $P_{T|x}[\sigma_{T|X}]$ is the projection to the maximum eigenvalue of 
the operator $ (1-\beta)\log \sigma_T[\sigma_{T|X}]
+\beta \Tr_Y \rho_{Y|x} (\log \sigma_{YT}[\sigma_{T|X}]-\log \rho_Y)$.

Given an initial point $\sigma_{T|X}^{(1)}$, we propose the following update rule
\begin{align}
\sigma_{T|X}^{(n+1)}:=\hat{\sigma}_{0,T|X}[\sigma_{T|X}^{(n)}].
\end{align}
As shown below, each step of this algorithm improves the value of the target function $f_0$.

The operator $\hat{\sigma}_{0,T|x}[\sigma_{T|X}]$ is characterized as
\begin{align}
\hat{\sigma}_{0,T|x}[\sigma_{T|X}]=
\lim_{\alpha \to 0}
\hat{\sigma}_{\alpha,\alpha,T|x}[\sigma_{T|X}].
\end{align}
Since
Theorem \ref{LBCOV} and \eqref{AMX} guarantee 
\begin{align}
& f_\alpha(\hat{\sigma}_{\alpha,\alpha,T|X}[\sigma_{T|X}])  \nonumber \\
= & J_{\alpha,\alpha}(
\hat{\sigma}_{\alpha,\alpha,T|X}[\sigma_{T|X}], \hat{\sigma}_{\alpha,\alpha,T|X}[\sigma_{T|X}])  \nonumber  \\
\le & J_{\alpha,\alpha}(
\hat{\sigma}_{\alpha,\alpha,T|X}[\sigma_{T|X}], \sigma_{T|X})  \nonumber \\
\le & J_{\alpha,\alpha}( \sigma_{T|X},\sigma_{T|X})
=f_\alpha(\sigma_{T|X}),\Label{LLP}
\end{align}
the limit $\alpha \to 0$ in \eqref{LLP} implies
\begin{align}
f_{\alpha\to0}( \hat{\sigma}_{0,T|X}[\sigma_{T|X}]) \le f_{\alpha\to0}(\sigma_{T|X}]),
\end{align}
which shows that
each step of this algorithm improves the value of the target function $f_{\rm DIB}:=f_{\alpha\to0}$.

\begin{algorithm}[H]
\caption{Quantum deterministic information bottleneck (QDIB) algorithm}
\Label{protocol-DIB}
\begin{algorithmic}[1]
\STATE {\bf Input:} A joint state $\rho_{XY}$ [cf.~(\ref{joint-state})].
\STATE Create a counter $n$ as the number of iterations, initialized to 1. 
\REPEAT  
\STATE Choose $\sigma_{T|X}^{(n+1)}$ as 
\begin{align}
\sigma_{T|x}^{(n+1)}=\frac{P_{T|x}[\sigma^{(n)}_{T|X}] \sigma^{(n)}_{T|x} P_{T|x}[\sigma^{(n)}_{T|X}]}{\Tr\big(\sigma^{(n)}_{T|x} P_{T|x}[\sigma_{T|X}]\big)}
\end{align}
 where $P_{T|x}[\sigma^{(n)}_{T|X}]$ is the projection on the space spanned by the eigenvectors of ${\cal F}_{\alpha=0}[\sigma^{(n)}_{T|X}](x)$ [cf.~(\ref{F-function})] corresponding to the minimum eigenvalue.
\STATE Set $n$ as $n+1$.
\UNTIL{convergence.}
\STATE {\bf Output:} {A c-q channel $\sigma_{T|X}^{(n+1)}$}
\end{algorithmic}
\end{algorithm}

\section{Approximate sufficient statistics from DIB}\Label{SQ}
\subsection{Task formulation}
Next, we discuss how DIB can be used for the extraction of useful information
under a classical-quantum (c-q) joint system composed of $X$ and $Y$ with the joint state $\rho_{XY}:= \sum_{x}P_X(x)|x\rangle \langle x|\otimes \rho_{Y|x}$,
where $X$ is a classical system and $Y $ is a quantum system.
For example, assume that our interest is in the quantum phenomena in the quantum system $Y$.
This quantum system $Y$ is correlated to the classical system $X$.
However, there is a possibility that the classical system $X$ contains redundant information.
In this case, it is useful to extract essential information from $X$ to describe the behavior 
of the quantum phenomena in the quantum system $Y$.
To discuss the essential information, we introduce the concept of 
$\epsilon$-(approximate) sufficient statistics
of the classical system $X$ with respect to the quantum system $Y$
while the papers \cite{PhysRevLett.117.090502,8115272} discussed 
this concept when system $Y$ is a classical system.

A function $f$ from $X$ to $T$ is called a sufficient statistics 
of $X$ for the quantum system $Y$
when there exists a conditional distribution $P_{X|T}$ such that
\begin{align}
\rho_{XY}= 
\sum_{t}P_{X|T}(x|t) |x\rangle \langle x| 
\otimes \sum_{x'\in f^{-1}(t)} P_X(x')  \rho_{Y|x'}.
\end{align}
The above condition is equivalent to the condition
\begin{align}
I(X:Y)=I(T:Y)
\end{align}
while in general we have the inequality $I(X:Y)\ge I(T:Y)$.

However, when we use sufficient statistics, we cannot remove 
a small correlation generated by a noise. 
As an example, suppose that the classical system $X$ is composed of 
two classical systems $X_1$ and $X_2$.  
Assume that we {have} a c-q state 
$\rho_{X_1X_2 Y}= 
\sum_{x_1} \sum_{x_2}P_{X_1,X_2}(x_1,x_2) |x_1,x_2\rangle \langle x_1,x_2| \otimes 
\rho_{Y|x_1}$ with 
two classical systems  $X_1$ and $X_2$. 

We assume that we have already {known} the distribution $P_{X_1 X_2}$
but we do not know $\rho_{Y|x}$.
Also, we assume that we generate this state several times and apply the state estimation 
to the generated state.
As {a} result, we obtain our estimate 
\begin{align}
\hat{\rho}_{X_1X_2Y}= 
\sum_{x_1} \sum_{x_2} P_{X_1X_2}(x_1,x_2) |x_1,x_2\rangle \langle x_1,x_2| \otimes 
\hat{\rho}_{Y|x_1,x_2}.\Label{AMO}
\end{align}
Since our estimate always has small error,
 $\hat{\rho}_{Y|x_1,x_2}$ is not exactly the same as $\rho_{Y|x_1}$, but it  is close to $\rho_{Y|x_1}$.
In this case, this difference should be considered as a noise.
That is, the dependence of $X_2$ is not essential.
It is better to consider that the correlation is given as
$\hat{\rho}_{Y|x_1}:=\sum_{x_2}P_{X_2|X_1}(x_2|x_1)\hat{\rho}_{Y|x_1,x_2}$ so that our estimate of $\rho_{X_1X_2Y}$ is given as
$\sum_{x_1} \sum_{x_2}P_{X_1,X_2}(x_1,x_2) |x_1,x_2\rangle \langle x_1,x_2| \otimes 
\hat{\rho}_{Y|x_1}$.
  
For $\epsilon>0$, 
a function $f:X\to T$ is called an $\epsilon$-sufficient statistics 
when the inequality
\begin{align}
I(X:Y) -\epsilon \le I(T:Y)
\end{align}
holds.
Hence, a sufficient statistics with $T$ of small size 
and an $\epsilon$-sufficient statistics 
can be considered as compressed data of $X$ with respect to $Y$.

In the above example, 
$X_1 X_2$ is a sufficient statistics for $Y$.
When $\delta$ is sufficiently small for $\epsilon$,
$I(X_1:Y) $ is close to $I(X_1X_2:Y) $, i.e., 
 $X_1$ is an $\epsilon$-sufficient statistics.
Hence, we can remove non-essential information $X_2$.
In fact, {if ${\cal X}={\cal X}_1\times {\cal X}_2$ is disturbed by a random permutation $\pi$,
it will be non-trivial to extract  essential information.}
To cover such a non-trivial case, we need a {systematic} approach 
to find such a function with a small-size $T$.
For this aim, we can use the information bottleneck algorithm.

To extract approximate sufficient statistics $T$, 
we focus on two requirements.
The mutual information $I(T:Y)$ should be larger,
and the entropy $H(T)$ should be smaller.
To handle these requirements, 
we simply minimize $H(T)-\beta I(T:Y)$ by using 
deterministic information bottleneck algorithm with $|{\cal T}|=|{\cal X}|$.
Since the algorithm minimizes $H(T)-\beta I(T:Y)$,
and the conditional distribution $P_{T|X}$ in the solution is deterministic,
the support of $P_T$ in the solution is expected to be smaller than
the original set ${\cal T}$.

\subsection{Numerics}

\begin{figure}[tb]{}
\begin{center} 
\includegraphics[width=0.6\linewidth]{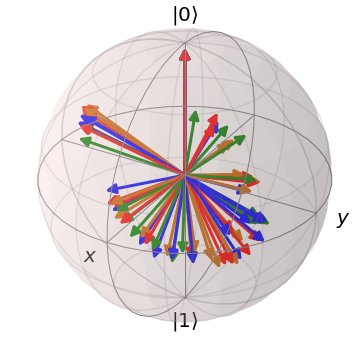}
\end{center}
\caption{{\bf Bloch representation of the estimated ensemble $\{\rho(\theta_{x_1,x_2},\lambda_{x_1,x_2})\}$.} As can be seen in the figure, the qubit states, especially those with higher purity, form several clusters in the Bloch ball. In each cluster, the states have the same value of $x_1$ and different values of $x_2$. This shows that the correlation between $X_1$ and $Y$ is higher than the correlation between $X_2$ and $Y$. }\Label{fig-bloch-DIB} 
\end{figure}

To demonstrate the above idea, let us take a look at a concrete example, which is a modification of the example in Section \ref{sec-var-gamma}.
Consider a single-qubit quantum system $Y$ and a classical register $X$ that encodes information about $Y$. 
The register $X$ is further split into two sub-registers $X_1$ and $X_2$ that take values in 
the sets ${\cal X}_1=\{0,1,\dots,4\}$ and 
${\cal X}_2=\{0,1,\dots,19\}$.  
Then, we assume that
$P_X$ is the uniform distribution over ${\cal X}_1\times {\cal X}_2$, and
the density $\rho_{Y|x_1}$ is given as 
$\rho(\theta_{x_1},\lambda_{x_1})$ with \eqref{numerics-rhox}.
The parameters $\theta$ and $\lambda$ depend on $x_1$ as
\begin{align}
\theta_{x_1}&:=\pi\cdot\frac{x_1}{|{\cal X}_1|}\qquad
\lambda_{x_1}:=\frac{x_1}{4|{\cal X}_1|}.
\end{align}

Obviously, the quantum system depends only on $X_1$ and $X_2$ contains no information about the quantum system. An experimentalist who has access to the ensemble, however, does not know this.
To extract information about the quantum system, for each pair of $(x_1,x_2)$, the experimentalist estimates its density matrix by repetitively (for $\nu<\infty$ times) making a suitable measurement on $\rho\left(\theta_{x_1},\lambda_{x_1}\right)$. 
According to quantum state estimation theory \cite{holevo2011probabilistic,helstrom1969quantum}, the estimate has an inaccuracy proportional to $1/\sqrt{\nu}$. Taking this into account, we model the estimated density matrix as $\rho\left(\theta_{x_1,x_2},\lambda_{x_1,x_2}\right)$ 
when the actual density matrix is $\rho\left(\theta_{x_1},\lambda_{x_1}\right)$, where
\begin{align}\Label{estimated-theta-lambda1}
\theta_{x_1,x_2}&:=\pi\cdot\frac{x_1}{|{\cal X}_1|}\left(1+r_\nu(x_1,x_2)\right)\\
\lambda_{x_1,x_2}&:=\frac{x_1}{4|{\cal X}_1|}\left(1+r'_\nu(x_1,x_2)\right)\Label{estimated-theta-lambda2}
\end{align}
and $r_\nu(x_1,x_2),r'_\nu(x_1,x_2)=O(1/\sqrt{\nu})$ characterise the estimation errors.
The estimated ensemble then admits the density matrix given in
\eqref{AMO} with $\hat{\rho}_{Y|x_1,x_2}=
\rho\left(\theta_{x_1,x_2},\lambda_{x_1,x_2}\right)$, which is
given by Eqs.~\eqref{numerics-rhox}, \eqref{estimated-theta-lambda1},
and \eqref{estimated-theta-lambda2}.
Notice that now the register $X_2$ is correlated with $Y$ in the estimated joint state $\hat{\rho}_{XY}$, even if the estimation-induced noise follows a distribution that does not depend on the value of $X_2$.

Now, the task is to compress the register $X$, by constructing a map from $X$ to a smaller {classical} register $T$. Here we take $T$ to be the same size as $X${.} 
One intuitive approach is to discard the $X_2$ register 
because $X_1$ contains much more information about the qubit state than $X_2$. 
Nevertheless, such a simple map does not exist in more general cases. 
For instance, if the values of $(x_1,x_2)$ in 
Eq.~(\ref{AMO}) are permuted, discarding $X_2$ will not result in faithful compression.
To see this, we further apply a arbitrary chosen unknown reshuffling $\pi:{\cal X}\to {\cal X}$ 
to the classical register ${\cal X}={\cal X}_1\times {\cal X}_2$ 
in Eq.~(\ref{AMO}). The ensemble then admits the following joint density matrix:
\begin{align}
\hat{\rho}'_{XY}
=&\sum_{x_1,x_2}\Big(P_{X}(x_1,x_2)|\pi(x_1,x_2)\>\<\pi(x_1,x_2)|\nonumber \\
&\otimes \rho\left(\theta_{x_1,x_2},\lambda_{x_1,x_2}\right)\Big)
\Label{DIB-state}
\end{align} 
with $\rho\left(\theta_{x_1,x_2},\lambda_{x_1,x_2}\right)$ given by Eqs.~(\ref{estimated-theta-lambda1}) and (\ref{estimated-theta-lambda2}). The goal is to extract an approximate sufficient statistics by constructing a map $Q:{\cal X}\to {\cal T}$.

\begin{widetext}
\begin{figure*}[bth]
\begin{center} 
\includegraphics[width=0.45\linewidth]{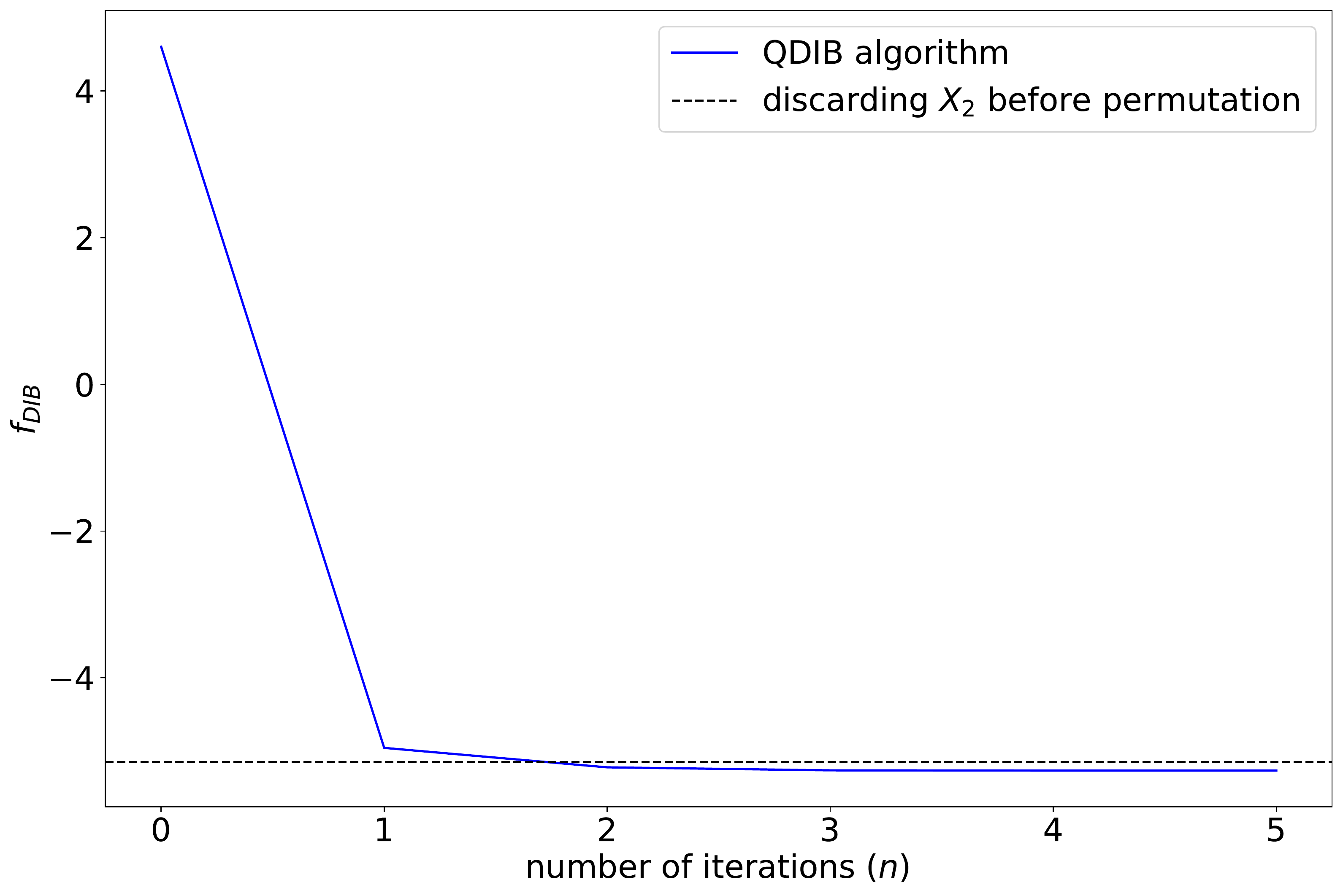}\qquad
\includegraphics[width=0.45\linewidth]{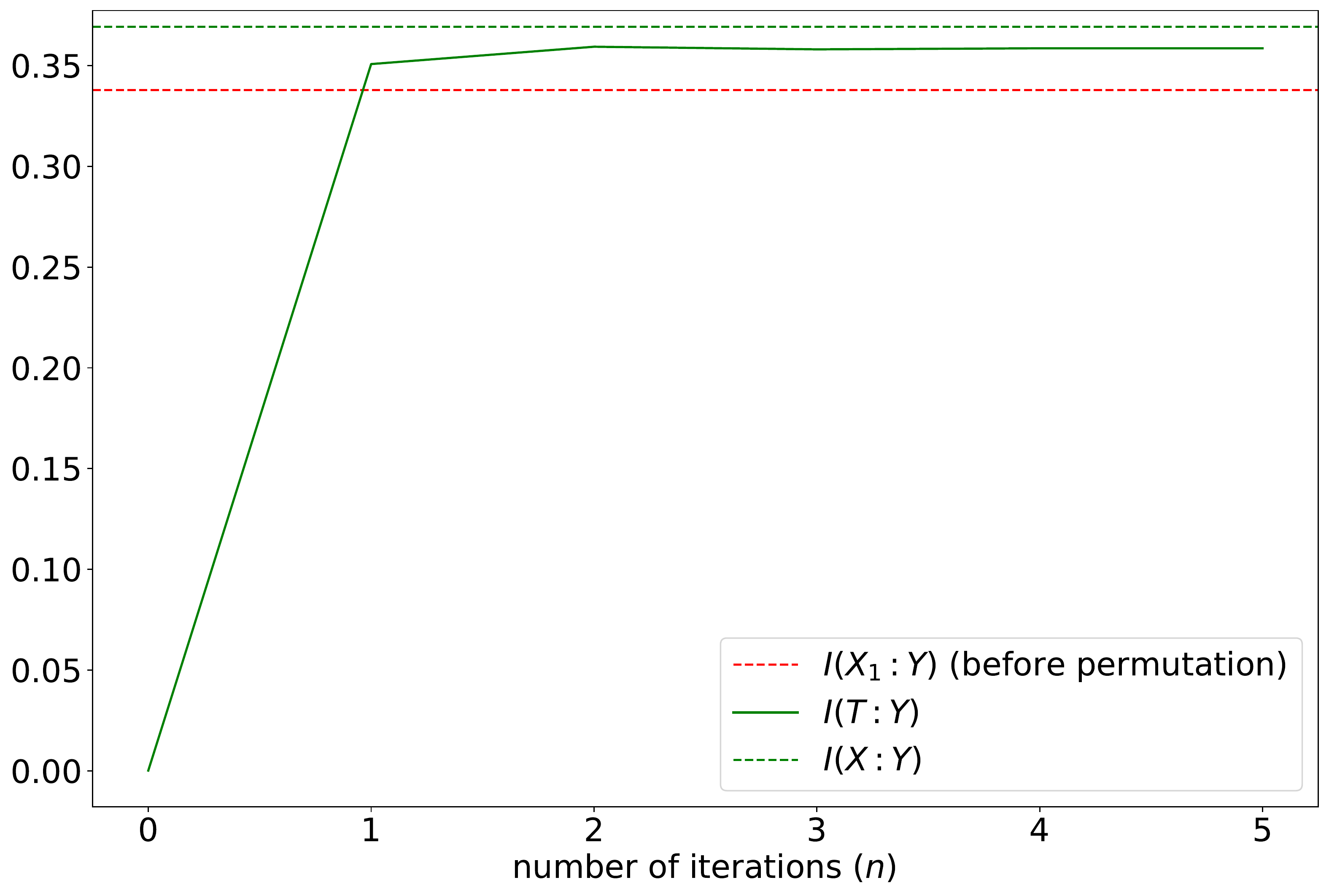}
\end{center}
\caption{{\bf Performance of QDIB algorithm in constructing approximate sufficient statistics.} We apply our quantum deterministic information bottleneck (QDIB) algorithm on the state (\ref{DIB-state}) (see also Fig.~\ref{fig-bloch-DIB}). For the joint state, we choose $|{\cal X}_1|=5$ 
and $|{\cal X}_2|=20$, and $P_X$ to be the uniform distribution over ${\cal X}={\cal X}_1\times {\cal X}_2$. The noise $r_\nu(x_1,x_2)$ and $r'_\nu(x_1,x_2)$ are drawn randomly and uniformly from the interval $(-1/\sqrt{\nu},1/\sqrt{\nu})$ with $\nu=20$ for any 
$x_1\in {\cal X}_1,x_2\in {\cal X}_2$.  In the QDIB algorithm 
(Algorithm \ref{protocol-DIB}), we choose $\beta=20$ and $|{\cal T}|=|{\cal X}|=100$.
In the figure above, the information bottleneck $f_{\rm DIB}:=f_{\alpha\to0}$ is plotted as a function of the number of iterations. 
As can be seen from the plot, the QDIB value of our algorithm becomes lower than that of the fictional protocol of ``discarding $X_2$ after the inverse permutation $\pi^{-1}$"   after only 3 iterations.
In the figure below, the faithfulness $I(T:Y)$ is plotted as a function of the number of iterations, and $I(X_1:Y)$ after the inverse permutation $\pi^{-1}$ 
(corresponding to the performance of the fictional protocol of 
``discarding $X_2$ after the inverse permutation $\pi^{-1}$") 
as well as $I(X:Y)$ (corresponding to the upper bound of $I(T:Y)$) are plotted for reference. 
Both plots justify that our QDIB algorithm performs well in the task of constructing approximate sufficient statistics.
}\Label{fig-DIB} 
\end{figure*}
\end{widetext}

Our QDIB algorithm works as a more systematic and more efficient method to extract essential information and discard non-essential information, 
even in the presence of an arbitrary permutation.  
In the QDIB algorithm (Algorithm \ref{protocol-DIB}), 
we choose $\beta=20$ and $|{\cal T}|=|{\cal X}|=|{\cal X}_1||{\cal X}_2|$.
First, we consider the case when the ensemble admits the form 
(\ref{AMO}), and the performance is summarised in Fig.~\ref{fig-DIB}. 
As one can see from the numerics, 
$f_{\rm DIB}:=f_{\alpha\to 0}$ of applying our QDIB algorithm to $\hat{\rho}_{XY}$ drops lower 
than that of the ``discarding $X_2$ after the inverse permutation $\pi^{-1}$'' 
approach within 5 iterations, 
and converges to a much lower value, 
suggesting a better compression performance. 
This is further justified in the second plot, where the faithfulness $I(T:Y)$ and the residual information $I(T:X)$ are plotted. 
We can see that 
since our QDIB algorithm preserves almost as much information about $Y$ 
as the original variable $X$,
it compresses a considerably larger portion of information about the original register $X$.

\if0 
 The performance of our QDIB algorithms is plotted in Fig.~\ref{fig-IB-model-perm}. One can see that, compared with Fig.~\ref{fig-IB-model}, the performance of QIB does not change much after introducing the permutation $\pi$. In contrast, information about $T$ in $X$ 
 becomes disturbed and spreads across the entire register (instead of being concentrated on $X_1$) due to the application of the permutation $\pi$. Consequently, discarding $X_2$ (after the permutation $\pi$) no longer yields faithful compression, as shown in Fig.~\ref{fig-IB-model-perm}.

Fig.~\ref{fig-IB-model} and Fig.~\ref{fig-IB-model-perm} also feature the comparison between the fixed-$\gamma$ QIB algorithm (Algorithm \ref{protocol1V}) and the adaptive-$\gamma$ QIB algorithm (Algorithm \ref{protocol3}). To compare their speeds of convergence, we fix their convergence criterion to be that the relative change of the IB value is less that $1\%$.
In both cases, the adaptive-$\gamma$ QIB algorithm (Algorithm \ref{protocol3}) converges after around 10 iterations, which is much faster than the fixed-$\gamma$ QIB algorithm (whose convergence occurs after around 50 iterations). The reason, as shown in Fig.~\ref{fig-gamma}, is that Algorithm \ref{protocol3} updates its acceleration parameter $\gamma$ dynamically and chooses a much smaller $\gamma$ (and consequently a much higher convergence speed) whenever possible.
\fi
\if0
\begin{figure}[tb]
\begin{center}  
\includegraphics[width=0.95\linewidth]{beta.pdf}
\end{center}
\caption{{\bf Convergence point of QIB vs.\ $\beta$.} The convergence point of QIB (i.e., the values of $I(X:T)$ and $I(Y:T)$ after convergence) is plotted as a function of $\beta$. From the plot, we can see that $I(X:T),I(Y:T)\approx0$ when $\beta$ is small, justifying the statement of Theorem \ref{NNKC}.}\Label{fig-beta} 
\end{figure}

Finally, in Fig.~\ref{fig-beta} we compared the performance of QIB for different values of $\beta$. The numerical result justifies our discussion in Section \ref{sec-beta}: When $\beta$ is too small, the QIB will converge to an output that is uncorrelated with both $X$ and $Y$. Therefore, we need to choose sufficiently large $\beta$ to get a meaningful outcome in the QIB problem.
\fi

\section{Discussion and conclusion} \Label{S9} 
We have proposed a generalized algorithm for QIB with an acceleration parameter $\gamma$
and an additional parameter $\alpha$,
and have derived a {necessary} condition for the monotonic decrease of the objective function
$ f_{\alpha}=H(T)-\alpha H(T|X)-\beta I(T:Y) $
with quantum systems $Y,T$ and classical system $X$
when we extract information $T$ with respect to $Y$ from $X$. 
We have also showed its convergence under the same condition and that 
a wisely-chosen  parameter $\gamma$ can accelerate the convergence.
Our numerical calculation has further justified the above analysis as follows.
In our numerical experiment, making $\gamma$ smaller accelerates the convergence, but if $\gamma$ is made smaller than a threshold the algorithm will fail to converge.
In addition, we have provided examples that 
quantum system $T$ have an advantage over classical system $T$ 
even when $Y$ and $X$ {are classical}.

Next, taking the limit $\alpha\to +0$, 
we have proposed an iterative algorithm for QDIB that minimizes 
the objective function $ f_{\rm DIB}=H(T)-\beta I(T:Y) $.
We have shown that this iterative algorithm always decreases the objective function monotonically.
QDIB can be applied to find an approximate sufficient statistics because 
it realizes a smaller entropy $H(T)$ and a larger mutual information $I(T:Y) $.
Then, we have numerically demonstrated that our QDIB algorithm works well as
an approximate sufficient statistics.

An important application we show in this work is that our QIB algorithm yields a new approach of constructing quantum feature maps for classification.
In our numerical example, 
quantum system $T$ realizes a smaller value of the objective function 
than classical system $T$.
This numerical analysis shows the advantage of using quantum memory $T$ for the classification.
Despite significant recent progress \cite{wittek2014quantum,schuld2015introduction,biamonte2017quantum,schuld2019quantum,havlivcek2019supervised,blank2020quantum,lloyd2020quantum,perez2020data,schuld2021supervised}, the advantage of quantum machine learning over its classical counterpart has not been discussed much. Our work provides a new angle of attacking this issue, shedding light on a new proposal to rigorously justify and quantify quantum supremacy in the world of learning.  

 An open question left for future study is how to extend our result to the case where $X$ is also a quantum system, which covers, for instance,the scenario of compressing a quantum system while keeping its correlation with a classical label \cite{plesch2010efficient,rozema2014quantum,yang2016efficient,PhysRevLett.117.090502,yang2018compression,yang2018quantum}. Remarkably, in such a scenario, it has been shown that, if $T$ is classical, some correlation will be lost regardless of its size \cite{yang2018compression}. Therefore, we anticipate that the advantage of a quantum $T$ might persist or grow even stronger for QIB with a quantum $X$.

{
Finally, we remark that currently there is no efficient method to compute the restriction on $\gamma$ in Theorem \ref{TH3}. Resolving this important issue in a future work will accelerate the convergence of our information bottleneck algorithm.
}

\section*{Acknowledgement}
MH is supported in part by the National Natural Science Foundation of China (Grant No.~62171212) and Guangdong Provincial Key Laboratory (Grant No.~2019B121203002). YY is supported by Guangdong Basic and Applied Basic Research Foundation (Project No. 2022A1515010340), and by the Hong Kong Research Grant Council (RGC) through the Early Career Scheme (ECS) grant 27310822.

\bibliography{ref}
\bibliographystyle{plainnat}

\end{document}